\DeclareMathOperator*{\argmax}{\mathop{\rm argmax}\limits}
\DeclareMathOperator*{\argmin}{\mathop{\rm argmin}\limits}
\newcommand{\bs}{\boldsymbol}
\newcommand{\mb}{\mathbf}
\newtheorem{cor}{Corollary}
\newtheorem*{rem}{Remark}
\newtheorem{thm}{Theorem}
\newtheorem{prop}{Proposition}
\title{New Insight of Spatial Scan Statistics via Regression Model}
\date{}
\author[1]{Takayuki Kawashima} \affil[1]{ Department of Mathematical and Computing Science, Institute of Science Tokyo, Tokyo, Japan 
\authorcr E-mail: \texttt{kawashima@c.titech.ac.jp} }
\author[2]{Daisuke Yoneoka} \affil[2]{Center for Surveillance, Immunization, and Epidemiologic Research, National Institute of Infectious Diseases,  Tokyo, Japan}
\author[3]{Yuta Tanoue} \affil[3]{ Faculty of Marine Technology, Tokyo University of Marine Science and Technology, Tokyo, Japan }
\author[4]{Akifumi Eguchi} \affil[4]{ Center for Preventive Medical Sciences, Chiba University, Chiba, Japan }
\author[5]{Shuhei Nomura} \affil[5]{Department of Health Policy and Management, School of Medicine, Keio University, Tokyo, Japan}
\begin{document}

\maketitle

\begin{abstract}
The spatial scan statistic is widely used to detect disease clusters in epidemiological surveillance. 
Since the seminal work by~\cite{kulldorff1997}, numerous extensions have emerged, including methods for defining scan regions, detecting multiple clusters, and expanding statistical models. 
Notably,~\cite{jung2009} and~\cite{ZHANG20092851} introduced a regression-based approach accounting for covariates, encompassing classical methods such as those of~\cite{kulldorff1997}. 
Another key extension is the expectation-based approach~\citep{neill2005anomalous,neillphdthesis}, which differs from the population-based approach represented by~\cite{kulldorff1997} in terms of hypothesis testing. 
In this paper, we bridge the regression-based approach with both expectation-based and population-based approaches. 
We reveal that the two approaches are separated by a simple difference: the presence or absence of an intercept term in the regression model. 
Exploiting the above simple difference, we propose new spatial scan statistics under the Gaussian and Bernoulli models. 
We further extend the regression-based approach by incorporating the well-known sparse L0 penalty and show that the derivation of spatial scan statistics can be expressed as an equivalent optimization problem. 
Our extended framework accommodates extensions such as space-time scan statistics and detecting multiple clusters while naturally connecting with existing spatial regression-based cluster detection. 
Considering the relation to case-specific models~\citep{she2011,10.1214/11-STS377}, clusters detected by spatial scan statistics can be viewed as outliers in terms of robust statistics. 
Numerical experiments with real data illustrate the behavior of our proposed statistics under specified settings. 
\end{abstract}

\section{Introduction}\label{sect intro}
Spatial scan statistics are widely used in epidemiology to identify disease clusters and in other fields to detect spatial clusters. 
Seminal work on spatial scan statistics was conducted by~\cite{kull1995} and~\cite{kulldorff1997}. 
This approach relies on the likelihood ratio, selecting the cluster with the highest likelihood ratio among predefined potential clusters as the test statistic and calculating p-values using Monte Carlo hypothesis testing~\citep{10.1214/aoms/1177707045}. 
Due to their effectiveness, spatial scan statistics have been applied to analyze sudden infant death syndrome~\citep{kulldorff1997}, leukemia~\citep{kull1995, lukemia2007}, and lung cancer~\citep{lungcancer1999}. 
More recently, they have been employed in several studies focusing on COVID-19~\citep{covid1, covid2, covid3, covid4, covid5}. 

Various extensions to spatial scan statistics have been proposed, including alternative methods for constructing potential clusters, such as elliptical shapes~\citep{https://doi.org/10.1002/sim.2490}, flexible shapes~\citep{tango2005flexibly}, and public network-based shapes~\citep{tanoue2023publictrans}; alternative statistical models, such as a Bayesian framework~\citep{NIPS2005_28acfe2d} and kernel methods~\citep{10.1145/3347146.3359101}; and adaptations for various types of data, including censored data~\citep{10.1111/j.1541-0420.2006.00661.x, 10.1111/j.1541-0420.2006.00714.x}, ordinal data~\citep{LEE201928}, and spatio-temporal data~\citep{kull2001}. 
For a comprehensive overview, see the extensive survey on spatial scan statistics~\citep{10.1214/21-SS132}.

One significant advancement in this field has been the incorporation of regression models to handle covariates~\citep{jung2009, ZHANG20092851}. 
This approach accommodates covariates while including Kulldorff's original spatial scan statistics as a special case when no covariates are considered (see Sect. 2.2-3 in~\cite{jung2009}). 
It serves as a natural generalization of Kulldorff's method. 
By leveraging regression models, many advanced methods with modifications~\citep{https://doi.org/10.1111/j.1541-0420.2008.01069.x, https://doi.org/10.1111/biom.12509, https://doi.org/10.1002/sim.8459} can be unified within the framework of spatial scan statistics. 

Another notable development involves the expectation-based approach, which employs a different hypothesis testing strategy compared to Kulldorff's population-based method~\citep{neill2005anomalous,neillphdthesis}. 
This approach has shown superior performance over traditional methods in certain scenarios, depending on the data type and quality. 
However, the theoretical relationship between these approaches and their connection to regression models remains unclear. 
To date, each method has advanced through independent research. 

In this paper, we first reveal the theoretical connections between the population-based and expectation-based approaches within the framework of spatial scan statistics as a regression model. 
Using this framework, we propose new spatial scan statistics for the Gaussian and Bernoulli models. 
Second, we introduce an equivalent optimization problem for the spatial scan statistics framework by integrating a simple extension of the regression model proposed by~\cite{jung2009, ZHANG20092851} with the well-known sparse $\ell_0$ penalty. 
One advantage of the proposed approach is that it enables the use of various techniques from the field of sparse regression, such as selective inference~\citep{lee2016exact, tibshirani2016exact}, eliminating the need for Monte Carlo hypothesis testing to compute p-values. 
We also demonstrate how the proposed framework can be extended to existing enhancements of spatial scan statistics, including space-time scan statistics~\citep{kull2001} and multiple clusters detection methods~\citep{Zhang2010SpatialSS, li2011}. 
Furthermore, beyond spatial scan statistics, our framework connects with established methodologies for spatial regression-based cluster detection and robust statistics. 

The remainder of this paper is organized as follows: In Sect. 2, we briefly review spatial scan statistics with covariates based on the regression model, provide a description of expectation-based and population-based approaches, and show connections to existing methods. 
Then, we propose new spatial scan statistics by taking advantage of the expression in the regression model. 
In Sect. 3, we present the equivalent framework via $\ell_0$ penalized regression for spatial scan statistics and investigate its relation to spatial regression methods for cluster detection and robust statistics. 
In Sect. 4, we provide numerical experiments, which we apply to real data to examine our proposed spatial scan statistics. 
The conclusion is presented in Sect. 5.

\section{Spatial Scan Statistics with Covariates}\label{sect sp scan brief}
Suppose that we have $N \ (i = 1, \ldots, N)$ regions, each of which has covariate variables $ x_{i} \in \mathbb{R}^p$ and an outcome $y_{i} \in \mathbb{R}$ (typically, the number of cases of a disease). 
Let $K$ be the number of potential clusters. 
Let $S_{k} \ (k=1, \ldots, K)$ denote the $k$-th potential cluster which consists of some regions and $Z_{k,i}$ be an indicator variable which means $Z_{k,i}=1 \mbox{ if } i \in S_{k}$, and $0$ otherwise, e.g., $S_1 = \left\{ 1,2,4 \right\}, Z_{1,1} =1, \mbox{ and } Z_{1,3} = 0  $. 
The potential cluster $S_{k}$ is ordinarily constructed using spatial information such as longitude and latitude.  
There are many studies on constructing potential clusters, such as rectangle shaped~\citep{naus1965clustering,loader1991large}, circular shaped~\citep{kulldorff1997}, elliptic shaped~\citep{https://doi.org/10.1002/sim.2490}, and flexibly shaped ones~\citep{tango2005flexibly}. 
We do not focus on how to construct potential clusters and assume that potential clusters are given in advance. 

\subsection{Regression-Based Spatial Scan Statistic}\label{subsect1 sp scan brief}
\cite{ZHANG20092851} and \cite{jung2009} proposed the spatial scan statistic based on regression models to treat covariates. 
They consider the conditional probability density function $ f (y_{i} | x_{i}, Z_{k,i}; \mu_i) $ with a mean parameter $\mu_i$. 
For a regression model, they assume the following linear regression model: 
\begin{align}\label{reg model}
g( \mathrm{E}_{ f (y_{i} | x_{i}, Z_{k,i}; \alpha, \theta_{S_{k}}, \bs{\beta}^{\top} ) } [ y_{i} ] )= g ( \mu_i ) =  \alpha + \theta_{S_{k}} Z_{k ,i} + x_{i}^{\top} \bs{\beta} ,
\end{align}
where $g$ is a monotonically differentiable link function, and $(\alpha, \theta_{S_{k}}, \bs{\beta}^{\top} )$ are the regression coefficients. 
$\theta_{S_{k}}$ expresses the additive effect of the potential cluster $S_{k}$. 
In what follows, we use $( \alpha, \theta_{S_k}, \bs{\beta}^{\top} )$ as parameters of the regression model instead of the mean parameter $\mu_i$ for simplicity. 

The following null and alternative hypotheses are considered. 
\begin{align*}
H_{0} &: \theta_{S_{k}} =0, \\
H_{1} &: \theta_{S_{k}} \neq 0.
\end{align*}
We prepare the parameter space $(\alpha, \theta_{S_{k}}, \bs{\beta}^{\top} ) \in \Theta \subset ( \mathbb{R} \times \mathbb{R} \times \mathbb{R}^p  )$ and $\Theta_{0}  ( \subset \Theta ) $, where $ \theta_{S_K} = 0$ for the null model. 
By $\Theta$ and $\Theta_0$, the log-likelihood ratio test statistic for $S_{k}$ is given by 
\begin{align}\label{llr def}
\mathrm{LLR}_k =  \log \frac{ \max_{ (\alpha, \theta_{S_{k}}, \bs{\beta}^{\top}   ) \in \Theta }  \Pi_{i=1}^N  f (y_{i} | x_{i}, Z_{k,i}; \alpha, \theta_{S_{k}}, \bs{ \beta }^{\top}  ) }{   \max_{ (\alpha, 0, \bs{\beta}^{\top}  ) \in \Theta_{0} }  \Pi_{i=1}^N   
f (y_{i} | x_{i}, Z_{k,i}; \alpha, 0, \bs{ \beta }^{\top}  ) }.
\end{align}
We define $S_0 \mbox{ and the corresponding log-likelihood ratio } \mathrm{LLR}_0 = 0$. 
The most likely cluster (MLC) is defined as the potential cluster that can achieve the maximum value of \eqref{llr def} over $\left\{ S _0, S_1, \ldots, S_K \right\}$, as follows: 
\begin{align}\label{llr max}
\mbox{MLC is } S_{k^*}, \mbox{ where } k^{*} = \argmax_{ k \in  \left\{ 0, \ldots, K \right\} } \mathrm{LLR}_k.
\end{align}
When the MLC is $S_0$, it indicates that the MLC is the empty set, i.e., there are no clusters within potential clusters. 
To evaluate the significance of the MLC except for the case of $k^* = 0$, one can compute the p-value of the MLC generally via the procedure of Monte Carlo hypothesis testing~\citep{10.1214/aoms/1177707045}. 

\subsection{Connection to Existing Spatial Scan Statistics}\label{subsect2 sp scan brief}
Here, we show that regression-based spatial scan statistics without covariates encompass various existing spatial scan statistics, including Kulldorff's spatial scan statistics. 
In addition, because the regression model can be flexibly transformed according to the purpose, e.g., interaction terms, the distribution of the outcome, and so on, we can propose a new type of spatial scan statistic that has not yet been suggested. 
While covariates are excluded from the regression model here to enable comparison with existing spatial scan statistics, they can be incorporated into all the proposed spatial scan statistics below. 

\subsubsection{Population-based and Expectation-based Spatial Scan Statistics}
For simplicity of understanding, we consider the following statistical model for the spatial scan statistics. 
\begin{align*}
c_i \sim g ( q_i b_i) \ \mbox{for }i=1,\ldots,N,
\end{align*}
where $c_i$ is a count, $g$ is a probability density function with the mean $q_i b_i$, and $q_i$ and $b_i $ are an unknown rate parameter and the baseline, respectively. 
For spatial scan statistics, one wants to detect regions that show significantly higher or lower $q_i$ via hypothesis testing. 
\cite{neill2005anomalous} and~\cite{neillphdthesis} showed that there are two typical approaches, such as the population-based and expectation-based, in the spatial scan statistics in terms of the interpretation of the baseline $b_i$. 
According to~\cite{neillphdthesis}, both approaches are briefly explained below. 

In the population-based approach, the baseline $b_i$ is typically the population, e.g., from census data, and the rate $q_i$ is the underlying disease rate. 
In the framework of hypothesis testing, the population-based approach can be expressed as follows: 
\begin{align*}
& H_0 : q_i = q_{all} ,  \\
& H_1 : q_i = q_{in} \mbox{ inside some region } S,  \ q_i = q_{out} \mbox{ outside } S ,
\end{align*}
where $q_{all}, q_{in}, q_{out}$ are some constants, and $q_{in} \neq q_{out}$. 
As an example, Kulldorf's spatial scan statistics for the Poisson model and Bernoulli model~\citep{kull1995,kulldorff1997} 
are classified as population-based spatial scan statistics. 
In the expectation-based approach, the baseline $b_i$ is typically the expected count of the inferred case of disease, and the rate $q_i$ is the underlying relative risk. 
The framework of hypothesis testing for the expectation-based approach can be expressed as follows: 
\begin{align*}
& H_0 : q_i = 1 ,  \\
& H_1 : q_i = q_{in} \mbox{ inside some region } S,  \ q_i = 1 \mbox{ outside } S ,
\end{align*}
where $q_{in} \neq 1$. 

Although in~\cite{neillphdthesis}, a qualitative explanation, e.g., depends on the type of data, has been provided regarding which approach to use, no theoretical distinctions other than the differences in hypotheses have been addressed. 
Furthermore, the relationship with regression-based spatial scan statistics has not been explored. 
Therefore, we aim to clarify the differences between the population-based and expectation-based approaches from the perspective of regression-based spatial scan statistics. 
As a result, leveraging regression modeling's flexibility, we can verify and propose both approaches for various types of statistical models, including those already known. 
The following section presents the results for specific statistical models. 

\subsubsection{Case of Poisson Model}
We consider the following Poisson regression model as~\eqref{reg model}: 
\begin{align}\label{poisson without cov}
f (y_{i} | Z_{k,i}; \alpha, \theta_{S_k} ) = \frac{ \exp( -\mu_i ) }{ y_i !} \mu_i^{y_i}, 
\end{align}
where $ \log ( \mu_i ) = \alpha + \theta_{S_k} Z_{k,i} + \log (\gamma_i)  $, and $\gamma_i$ is an offset term (typically, a population at risk or expected counts of disease in the $i$th region). 
\cite{jung2009} demonstrated that the log-likelihood ratio~\eqref{llr def} under the Poisson regression model specified in equation~\eqref{poisson without cov} is equivalent to the following Kulldorff's spatial scan statistic~\citep{kulldorff1997}, which belongs to the category of population-based spatial scan statistics.  
\begin{align}\label{population based poisson kulldorf}
& \mathrm{LLR}_k \nonumber \\ 
& = \left\{ \left(  \sum_{i=1}^N y_i Z_{k,i} \right) \log \left( \frac{ \sum_{i=1}^N y_i Z_{k,i} }{  \sum_{i=1}^N \gamma_i Z_{k,i} } \right) + \left( \sum_{i=1}^N y_i ( 1 -Z_{k,i} )  \right) \log \left( \frac{ \sum_{i=1}^N y_i ( 1 - Z_{k,i} ) }{  \sum_{i=1}^N \gamma_i (1 - Z_{k,i}) } \right) \right\}  \nonumber \\
& \qquad - \left( \sum_{i=1}^N y_i \right) \log \left(  \frac{ \sum_{i=1}^N y_i }{ \sum_{i=1}^N \gamma_i } \right). 
\end{align}
Then, we can see that the log-likelihood ratio~\eqref{llr def} corresponds to Neill's spatial scan statistic~\citep{neill2005anomalous}, which falls under the category of expectation-based spatial scan statistics, by using the Poisson regression model \textit{without the intercept term $\alpha$}. 

\begin{prop}\label{prop equiv exp based under poisson}
We consider the following Poisson regression model without the intercept term $\alpha$ in~\eqref{poisson without cov}. 
\begin{align*}
f (y_{i} | Z_{k,i}; 0, \theta_{S_k} ) = \frac{ \exp( -\mu_i ) }{ y_i !} \mu_i^{y_i}, 
\end{align*}
where $ \log ( \mu_i ) =  \theta_{S_k} Z_{k,i} + \log (\gamma_i)  $. 
Then, the log-likelihood ratio~\eqref{llr def} is equivalent to the following Neill's spatial scan statistic~\citep{neill2005anomalous}. 
\begin{align}\label{expectation based poisson}
\mathrm{LLR}_k = \left( \sum_{i=1}^N y_i Z_{k,i} \right) \log \left( \frac{ \sum_{i=1}^N y_i Z_{k,i} }{  \sum_{i=1}^N \gamma_i Z_{k,i} } \right) + \sum_{i=1}^N ( \gamma_i - y_i )Z_{k,i}  .
\end{align}
\end{prop}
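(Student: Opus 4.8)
The plan is to compute the numerator and denominator of the log-likelihood ratio \eqref{llr def} directly, exploiting the fact that deleting the intercept $\alpha$ collapses the null model to a single point. First I would write the Poisson log-likelihood as $\ell(\theta_{S_k}) = \sum_{i=1}^N \left[ -\mu_i + y_i \log \mu_i - \log(y_i!) \right]$ and substitute $\mu_i = \gamma_i \exp(\theta_{S_k} Z_{k,i})$. Since $Z_{k,i} \in \{0,1\}$, every term depending on $\theta_{S_k}$ arises only from regions inside $S_k$, so the log-likelihood reduces to
\[
\ell(\theta_{S_k}) = -\left( \sum_{i=1}^N \gamma_i Z_{k,i} \right) e^{\theta_{S_k}} + \theta_{S_k} \sum_{i=1}^N y_i Z_{k,i} + C,
\]
where $C$ collects all terms independent of $\theta_{S_k}$ (including the full outside-cluster contribution $-\sum_{i=1}^N \gamma_i(1-Z_{k,i})$ and the factorial and $\log\gamma_i$ terms).

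Next I would maximize over $\theta_{S_k}$. Differentiating and setting the result to zero gives $e^{\hat{\theta}_{S_k}} = \left( \sum_{i=1}^N y_i Z_{k,i} \right) / \left( \sum_{i=1}^N \gamma_i Z_{k,i} \right)$, i.e. the in-cluster fitted rate equals the observed-to-offset ratio, while outside the cluster the mean stays pinned at $\gamma_i$. Substituting $\hat{\theta}_{S_k}$ back produces the maximized numerator log-likelihood; this is exactly where the expectation-based structure surfaces.

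The crucial observation, and the only place the absence of $\alpha$ matters, is that under $H_0$ we set $\theta_{S_k}=0$ and there is no remaining free parameter, so the denominator of \eqref{llr def} is simply the likelihood evaluated at $\mu_i = \gamma_i$ for every $i$, with no maximization over $\Theta_0$ required. This contrasts with the population-based derivation of \eqref{population based poisson kulldorf}, where the surviving intercept forces a common rate to be estimated under the null, generating the extra global term $\left(\sum_{i=1}^N y_i\right)\log\left(\sum_{i=1}^N y_i / \sum_{i=1}^N \gamma_i\right)$. Forming $\ell(\hat{\theta}_{S_k}) - \ell(0)$, the constant $C$ cancels and the remainder collapses to $\left( \sum_{i=1}^N y_i Z_{k,i} \right)\log\left( \sum_{i=1}^N y_i Z_{k,i} / \sum_{i=1}^N \gamma_i Z_{k,i} \right) + \sum_{i=1}^N (\gamma_i - y_i)Z_{k,i}$, which is precisely \eqref{expectation based poisson}.

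There is little genuine obstacle here: the computation is a short calculus exercise, and the main care needed is bookkeeping, namely recognizing that the outside-cluster contributions are identical in numerator and denominator and cancel, and that without $\alpha$ the null likelihood is parameter-free. Conceptually, the entire content of the proposition lies in that last fact, so I would state explicitly that removing the intercept removes the only parameter shared between $H_0$ and $H_1$, which is exactly what converts the population-based statistic into Neill's expectation-based one.
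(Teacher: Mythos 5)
Your proposal is correct and follows essentially the same route as the paper's proof: both exploit that the null model is parameter-free (no maximization in the denominator), compute the MLE $e^{\hat{\theta}_{S_k}} = \sum_{i=1}^N y_i Z_{k,i} / \sum_{i=1}^N \gamma_i Z_{k,i}$, and substitute back to obtain \eqref{expectation based poisson}. The only difference is bookkeeping: you collect the $\theta_{S_k}$-independent terms into a cancelling constant $C$ up front, whereas the paper carries the full sums through and splits in-cluster versus out-of-cluster contributions at the end via the exponent $Z_{k,i}$.
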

The proof is in the Appendix. 
The difference between~\eqref{population based poisson kulldorf} an~\eqref{expectation based poisson} is simply whether the regression model contains the intercept term or not. 
This difference can be seen as giving the following interpretation of the mean structure in the regression model. 
\begin{align*}
\frac{\mu_i }{ \gamma_i } = \exp(\alpha ) \times \exp ( \theta_{S_k} Z_{k,i} ).
\end{align*}
The left side shows the underlying disease rate and relative risk for the population-based and expectation-based spatial scan statistics, respectively. 
The right side is $\exp(\alpha)$ and $\exp(0) = 1$ for population-based and expectation-based spatial scan statistics, respectively, under no clusters, i.e., $\theta_{S_k} =0$. 
\begin{rem}
The theoretical difference between population-based and expectation-based approaches has not been deeply investigated. 
Recently,~\cite{tanoue2023publictrans} proved that under the conditions that $\sum_{i=1}^N y_i = \sum_{i=1}^N \gamma_i$ and $\sum_{i=1}^N y_i \rightarrow \infty$, the population-based~\eqref{population based poisson kulldorf} and expectation-based~\eqref{expectation based poisson} approaces are equivalent. 
However, it is for a case in terms of limit, making it difficult to apply the concept to other statistical models and making it less versatile. 
On the other hand, we can present a clear difference in the regression model, which is simply whether they include an intercept term or not. 
Subsequently, this relationship can be expanded to other statistical models and can be utilized to propose new spatial scan statistics. 
\end{rem}

\subsubsection{Case of Gaussian Model}\label{connect ex scan gauss model}
\cite{neillphdthesis} has proposed both population-based and expectation-based spatial scan statistics under the Gaussian model. 
Here, we can show that both spatial scan statistics proposed by~\cite{neillphdthesis} can be expressed to the regression-based spatial scan statistics, similarly to the Poisson model. 
\begin{prop}\label{prop equiv exp based under gauss}
We consider the following regression model under the Gaussian model:  
\begin{align*}
f (y_{i} | Z_{k,i}; \alpha, \theta_{S_k}  ) =  \frac{1}{ \sqrt{2 \pi \sigma_i^2 } } \exp \left\{ - \frac{ ( y_i- \mu_i )^2 }{ 2 \sigma_i^2 } \right\}  , 
\end{align*}
where $ \mu_i  =  \gamma_i( 1 + \alpha + \theta_{S_k} Z_{k,i} )$, and both $\gamma_i$ and $ \sigma_{i}^2$ are not parameters but pre-determined values. 
Then, the log-likelihood ratio~\eqref{llr def} is equivalent to the following population-based spatial scan statistic: 
\begin{align*}
\mathrm{LLR}_k = \frac{ \left( \sum_{i=1}^N \frac{y_i \gamma_i }{ \sigma_i^2} Z_{k,i} \right)^2 }{ 2 \sum_{i=1}^N \frac{ \gamma_i^2 }{ \sigma_{i}^2 } Z_{k,i}  } + \frac{ \left( \sum_{i=1}^N \frac{y_i \gamma_i }{ \sigma_i^2} ( 1 - Z_{k,i} ) \right)^2 }{ 2 \sum_{i=1}^N \frac{ \gamma_i^2 }{ \sigma_{i}^2 } ( 1 - Z_{k,i} )  } -\frac{ \left( \sum_{i=1}^N \frac{y_i \gamma_i }{ \sigma_i^2}  \right)^2 }{ 2 \sum_{i=1}^N \frac{ \gamma_i^2 }{ \sigma_{i}^2 }  }.
\end{align*}

In addition, we consider a regression model without the intercept term $\alpha$ under the Gaussian model. 
\begin{align*}
f (y_{i} | Z_{k,i}; 0, \theta_{S_k}  ) = \frac{1}{ \sqrt{2 \pi \sigma_i^2 } } \exp \left\{ - \frac{ ( y_i- \mu_i )^2 }{ 2 \sigma_i^2 } \right\}, 
\end{align*}
where $ \mu_i  =  \gamma_i ( 1 + \theta_{S_k} Z_{k,i} )$. 
Then, the log-likelihood ratio~\eqref{llr def} is equivalent to the following expectation-based spatial scan statistic: 
\begin{align*}
\mathrm{LLR}_k =  \frac{ \left( \sum_{i=1}^N \frac{ y_i \gamma_i }{ \sigma_i^2 }  Z_{k,i} \right)^2 }{ 2 \sum_{i=1}^N \frac{ \gamma_i^2 }{ \sigma_i^2 } Z_{k,i} } + \frac{ \sum_{i=1}^N \frac{ \gamma_i^2 }{ \sigma_i^2 } Z_{k,i} }{2} - \sum_{i=1}^N \frac{ y_i \gamma_i }{ \sigma_i^2 }  Z_{k,i}  .
\end{align*}
\end{prop}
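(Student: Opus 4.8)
The plan is to treat both log-likelihood ratios as differences of minimized weighted sums of squares, exploiting that $\sigma_i^2$ and $\gamma_i$ are fixed constants rather than parameters. Since the Gaussian normalizing factor $1/\sqrt{2\pi\sigma_i^2}$ does not depend on $(\alpha,\theta_{S_k})$, it cancels in the ratio \eqref{llr def}, so maximizing the likelihood is equivalent to minimizing the weighted residual sum of squares $Q = \sum_{i=1}^N (y_i-\mu_i)^2/(2\sigma_i^2)$. Hence $\mathrm{LLR}_k = Q_0^{\min} - Q_1^{\min}$, the difference between the null and full minima, and the whole proof reduces to computing four one-parameter weighted least-squares fits in closed form. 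Throughout I would use the identifications $\sum_{i:Z_{k,i}=1}(\cdot) = \sum_{i=1}^N (\cdot)Z_{k,i}$ and $\sum_{i:Z_{k,i}=0}(\cdot) = \sum_{i=1}^N(\cdot)(1-Z_{k,i})$ to pass to the stated $Z_{k,i}$-weighted form.

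For the population-based case I would reparametrize the means so that the optimization decouples across the cluster boundary: writing $a = 1+\alpha$ for the outside coefficient and $b = 1+\alpha+\theta_{S_k}$ for the inside one, the full-model objective splits as $\sum_{i:Z_{k,i}=1}(y_i-\gamma_i b)^2/(2\sigma_i^2) + \sum_{i:Z_{k,i}=0}(y_i-\gamma_i a)^2/(2\sigma_i^2)$, and the two pieces minimize independently. Each is a weighted regression of $y_i$ on $\gamma_i$ with minimizer $\hat b = (\sum \gamma_i y_i/\sigma_i^2)/(\sum \gamma_i^2/\sigma_i^2)$ over the relevant index set, and minimum value $\sum w_i y_i^2 - (\sum \gamma_i y_i/\sigma_i^2)^2/(2\sum \gamma_i^2/\sigma_i^2)$ with $w_i = 1/(2\sigma_i^2)$. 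The null model constrains $a=b$ globally, giving one such fit over all $N$ regions. Subtracting, the $\sum w_i y_i^2$ terms cancel and the three surviving quadratic-form terms reproduce the stated population-based statistic exactly.

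For the expectation-based case the absence of $\alpha$ pins the outside mean at $\mu_i=\gamma_i$ instead of letting it float, so only the inside coefficient $d = 1+\theta_{S_k}$ is free. The full-model minimum is then the inside weighted fit over $\{i:Z_{k,i}=1\}$ plus the fixed outside contribution $\sum_{i:Z_{k,i}=0}(y_i-\gamma_i)^2/(2\sigma_i^2)$, while the null model fixes $\mu_i=\gamma_i$ everywhere, giving $Q_0^{\min} = \sum_{i=1}^N (y_i-\gamma_i)^2/(2\sigma_i^2)$. In the difference $Q_0^{\min}-Q_1^{\min}$ the entire outside contribution cancels, leaving only terms on the cluster.

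The main work is the remaining algebra on the cluster: expanding $\sum_{i:Z_{k,i}=1}(y_i-\gamma_i)^2/(2\sigma_i^2)$ and noting that its $\sum w_i y_i^2$ term cancels against the same term in the inside fit's minimum, one is left with the cross term $-\sum \gamma_i y_i/\sigma_i^2$, the half-sum $\tfrac{1}{2}\sum \gamma_i^2/\sigma_i^2$, and the single quadratic form $(\sum \gamma_i y_i/\sigma_i^2)^2/(2\sum \gamma_i^2/\sigma_i^2)$, which are precisely the three terms of the stated expectation-based statistic. I expect the only delicate point to be bookkeeping the factors of two and the weights $1/\sigma_i^2$ consistently; conceptually the decoupling reparametrization does all the real work, exactly paralleling the intercept/no-intercept dichotomy already established for the Poisson model in Proposition \ref{prop equiv exp based under poisson}.
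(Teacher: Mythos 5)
Your proposal is correct and takes essentially the same route as the paper's own proof: the paper likewise cancels the Gaussian normalizing constants, profiles out the parameters via closed-form weighted least-squares fits (its maximizers $\hat{\alpha}$ and $\hat{\theta}_{S_k}$ are exactly your decoupled outside and inside coefficients $\hat a - 1$ and $\hat b - \hat a$), and subtracts the null fit from the alternative fit. Your reparametrization $a = 1+\alpha$, $b = 1+\alpha+\theta_{S_k}$ merely makes explicit the inside/outside decoupling that the paper's plug-in computation uses implicitly, so the two arguments coincide step for step.
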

The proof is in the Appendix. 
Again, the difference between population-based and expectation-based can be seen in whether the regression model includes the intercept term $\alpha$ or not. 
Similar to the Poisson model, the mean structure in the regression model can be interpreted as follows: 
\begin{align*}
\frac{\mu_i }{ \gamma_i } = 1 + \alpha  +  \theta_{S_k} Z_{k,i} .
\end{align*}
The right side shows that the baseline value of the ratio is $ 1 + \alpha $ for population-based and $ 1 + 0 = 1 $ for expectation-based under no clusters, i.e., $\theta_{S_k} =0$. 

For the fully parameterized case, i.e., the variance $\sigma^2$ is not given, \cite{Kulldorff2009} proposed another spatial scan statistic. 
We can obtain the same spatial scan statistic proposed by \cite{Kulldorff2009} via the regression-based spatial scan statistic as follows: 
\begin{prop}\label{prop population based full para gauss SSS}
We consider a regression model under the Gaussian model. 
\begin{align*}
f (y_{i} | Z_{k,i}; \alpha, \theta_{S_k} , \sigma^2  ) = \frac{1}{ \sqrt{2 \pi \sigma^2 } } \exp \left\{ - \frac{ ( y_i- \mu_i )^2 }{ 2 \sigma^2 } \right\}, 
\end{align*}
where $ \mu_i  =  \alpha  + \theta_{S_k} Z_{k,i} $. 
Here, the variance $\sigma^2$ is not given but a parameter under the Gaussian model. 
Then, the log-likelihood ratio~\eqref{llr def} is equivalent to the following spatial scan statistic proposed by \cite{Kulldorff2009}: 
\begin{align*}
\mathrm{LLR}_k = \frac{N}{2} \log \left\{ \frac{ 1 }{ N } \sum_{i=1}^N  ( y_i - \frac{1}{N} \sum_{i=1}^N y_i )^2 \right\}  - \frac{N}{2} \log \widehat{ \sigma_{S_k}^2 } ,
\end{align*}
where $ \widehat{  \sigma_{S_k}^2 } = \frac{1}{N} \left( \sum_{i=i}^N y_i^2   - \frac{ ( \sum_{i=1}^N y_i Z_{k,i}  )^2 }{ \sum_{i=1}^N Z_{k,i} } - \frac{ \left\{ \sum_{i=1}^N y_i (1- Z_{k,i} ) \right\}^2 }{ \sum_{i=1}^N (1 - Z_{k,i}) } \right) $.
\end{prop}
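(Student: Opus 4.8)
The plan is to reduce the ratio \eqref{llr def} to a comparison of two maximized Gaussian log-likelihoods, each of which collapses to a simple closed form once the variance is profiled out. The essential observation is that the indicator $Z_{k,i}$ partitions the $N$ regions into an ``inside'' group ($Z_{k,i}=1$) and an ``outside'' group ($Z_{k,i}=0$), so the mean structure $\mu_i = \alpha + \theta_{S_k} Z_{k,i}$ is nothing but a two-group means model: $\mu_i = \alpha$ outside and $\mu_i = \alpha + \theta_{S_k}$ inside. Detecting the MLEs therefore amounts to an elementary analysis-of-variance computation.

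First I would maximize the numerator. Writing the Gaussian log-likelihood as $-\tfrac{N}{2}\log(2\pi\sigma^2) - \tfrac{1}{2\sigma^2}\sum_{i=1}^N (y_i-\mu_i)^2$, maximizing over $(\alpha,\theta_{S_k})$ for fixed $\sigma^2$ is least squares on the two groups, giving fitted values equal to the group means, $\hat\mu_i = (\sum_j y_j Z_{k,j})/(\sum_j Z_{k,j})$ inside and $\hat\mu_i = (\sum_j y_j(1-Z_{k,j}))/(\sum_j(1-Z_{k,j}))$ outside. The residual sum of squares then admits the standard decomposition $\sum_i (y_i-\hat\mu_i)^2 = \sum_i y_i^2 - (\sum_i y_i Z_{k,i})^2/\sum_i Z_{k,i} - (\sum_i y_i(1-Z_{k,i}))^2/\sum_i(1-Z_{k,i})$, since each group of size $n_g$ with mean $\bar y_g$ contributes $n_g \bar y_g^2 = (\sum_{i\in g} y_i)^2/n_g$ to the explained sum of squares. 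Profiling out $\sigma^2$ via its stationarity condition yields $\hat\sigma^2 = \tfrac{1}{N}\sum_i(y_i-\hat\mu_i)^2 = \widehat{\sigma_{S_k}^2}$, and substituting back gives the key simplification that the maximized log-likelihood equals $-\tfrac{N}{2}\log(2\pi\widehat{\sigma_{S_k}^2}) - \tfrac{N}{2}$, because $\tfrac{1}{2\hat\sigma^2}\sum_i(y_i-\hat\mu_i)^2 = N/2$.

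For the denominator I would repeat the same two steps under the null constraint $\theta_{S_k}=0$, so that $\mu_i = \alpha$ for every $i$. The mean MLE is the grand mean $\bar y = \tfrac{1}{N}\sum_i y_i$, the variance MLE is $\hat\sigma_0^2 = \tfrac{1}{N}\sum_i(y_i-\bar y)^2$, and by the same argument the maximized null log-likelihood is $-\tfrac{N}{2}\log(2\pi\hat\sigma_0^2) - \tfrac{N}{2}$.

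Finally, forming the difference of the two maximized log-likelihoods, the additive constants $-\tfrac{N}{2}$ and the $\log(2\pi)$ terms cancel, leaving exactly $\tfrac{N}{2}\log\{\tfrac{1}{N}\sum_i(y_i-\bar y)^2\} - \tfrac{N}{2}\log\widehat{\sigma_{S_k}^2}$, which is the claimed statistic. The only mildly delicate points are verifying the ANOVA residual decomposition and confirming that the profiled stationary point is the global maximizer over $\sigma^2>0$; both are routine, so I anticipate no genuine obstacle beyond careful bookkeeping of the two-group sums.
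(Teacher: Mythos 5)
Your proposal is correct and follows essentially the same route as the paper's proof: both compute the MLEs of $(\alpha,\theta_{S_k},\sigma^2)$ under the alternative and of $(\alpha,\sigma^2)$ under the null, profile out the variance so each maximized log-likelihood reduces to $-\tfrac{N}{2}\log(2\pi\,\widehat{\sigma}^2)-\tfrac{N}{2}$, and subtract, leaving the ratio of log variances. Your identification of the fitted values as the two group means plus the ANOVA decomposition of the residual sum of squares is just a cleaner packaging of the normal equations the paper solves explicitly, so there is no substantive difference.
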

The proof is in the Appendix. 
Due to the expression of the regression form for the spatial scan statistic, we can propose a new spatial scan statistic by considering the following \textit{regression model without the intercept term $\alpha$} under the Gaussian model. 
\begin{thm}\label{prop expectation based full para gauss SSS}
We consider a regression model without the intercept term $\alpha$ under the Gaussian model. 
\begin{align*}
f (y_{i} | Z_{k,i}; 0, \theta_{S_k} , \sigma^2  ) = \frac{1}{ \sqrt{2 \pi \sigma^2 } } \exp \left\{ - \frac{ ( y_i- \mu_i )^2 }{ 2 \sigma^2 } \right\}, 
\end{align*}
where $ \mu_i  =  \theta_{S_k} Z_{k,i} $. 
Then, we have the following new spatial scan statistic: 
\begin{align*}
\mathrm{LLR}_k =   \frac{N}{2} \log \left( \frac{1}{N} \sum_{i=1}^N y_i^2 \right) - \frac{N}{2} \log   \frac{1}{N} \left\{  \sum_{i=1}^N y_i^2 -  \frac{ \left( \sum_{i=1}^N y_i   Z_{k,i} \right)^2  }{ \sum_{i=1}^N Z_{k,i} }  \right\}   .
\end{align*}
\end{thm}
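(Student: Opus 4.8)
The plan is to follow the same route as the fully parameterized case treated in Proposition~\ref{prop population based full para gauss SSS}, the only change being the mean structure $\mu_i = \theta_{S_k} Z_{k,i}$ in place of $\mu_i = \alpha + \theta_{S_k} Z_{k,i}$. Since both $\theta_{S_k}$ and $\sigma^2$ are now free parameters, I would first profile out the variance. For a Gaussian log-likelihood $\ell(\theta_{S_k},\sigma^2) = -\frac{N}{2}\log(2\pi\sigma^2) - \frac{1}{2\sigma^2}\sum_{i=1}^N (y_i - \mu_i)^2$, maximizing over $\sigma^2$ gives the usual MLE $\widehat{\sigma^2} = \frac{1}{N}\sum_{i=1}^N(y_i - \widehat{\mu_i})^2$ and the profiled maximum $-\frac{N}{2}\log(2\pi\widehat{\sigma^2}) - \frac{N}{2}$. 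Applying this to the full model (over $\Theta$) and to the null model (over $\Theta_0$, where $\theta_{S_k}=0$) separately, the factor $2\pi$ and the additive constant $-\frac{N}{2}$ cancel in the ratio, leaving $\mathrm{LLR}_k = \frac{N}{2}\log\widehat{\sigma_0^2} - \frac{N}{2}\log\widehat{\sigma_1^2}$, where $\widehat{\sigma_0^2}$ and $\widehat{\sigma_1^2}$ are the maximized residual variances under the null and full models, respectively.

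Next I would compute these two variances. Under the null model $\mu_i = 0$, so the residual sum of squares is simply $\sum_{i=1}^N y_i^2$ and $\widehat{\sigma_0^2} = \frac{1}{N}\sum_{i=1}^N y_i^2$, which furnishes the first term of the claimed expression. Under the full model, the MLE of $\theta_{S_k}$ solves the single normal equation $\sum_{i=1}^N Z_{k,i}(y_i - \theta_{S_k} Z_{k,i}) = 0$; here I would use the idempotence of the indicator, $Z_{k,i}^2 = Z_{k,i}$, so that $\sum_i Z_{k,i}^2 = \sum_i Z_{k,i}$ and hence $\widehat{\theta_{S_k}} = \big(\sum_i y_i Z_{k,i}\big)\big/\big(\sum_i Z_{k,i}\big)$. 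Substituting back and again invoking $Z_{k,i}^2 = Z_{k,i}$ collapses the residual sum of squares to $\sum_i y_i^2 - (\sum_i y_i Z_{k,i})^2/(\sum_i Z_{k,i})$, which yields $\widehat{\sigma_1^2} = \frac{1}{N}\{\sum_i y_i^2 - (\sum_i y_i Z_{k,i})^2/(\sum_i Z_{k,i})\}$ and thus the second term.

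I do not anticipate a genuine obstacle: the computation is a standard profile-likelihood argument, and the statement is labelled a theorem because the resulting statistic is new, not because the derivation is hard. The one step that deserves care is the variance profiling—one must check that maximizing over $\sigma^2$ in both numerator and denominator produces the clean cancellation of the $-\frac{N}{2}$ and $\log(2\pi)$ contributions, so that the statistic reduces exactly to the difference of log residual variances. Beyond that, the only thing to watch is the consistent use of $Z_{k,i}^2 = Z_{k,i}$, which is precisely what makes both the estimator $\widehat{\theta_{S_k}}$ and the residual sum of squares simplify, exactly as in the proof of Proposition~\ref{prop population based full para gauss SSS} but with the intercept column of ones removed.
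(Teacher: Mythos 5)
Your proposal is correct and follows essentially the same route as the paper's proof: compute the maximum-likelihood estimates $\hat{\theta}_{S_k} = \bigl(\sum_i y_i Z_{k,i}\bigr)/\bigl(\sum_i Z_{k,i}\bigr)$ and the residual variances under the full and null models, observe that the $\log(2\pi)$ and $-N/2$ contributions cancel so that $\mathrm{LLR}_k$ reduces to the difference of log residual variances, and use $Z_{k,i}^2 = Z_{k,i}$ to collapse the residual sum of squares. Profiling out $\sigma^2$ first rather than writing down all MLEs simultaneously is only a cosmetic reorganization of the same computation.
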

The proof is in the Appendix. 
By a similar analogy to the case of the Poisson model, the spatial scan statistic in Proposition~\ref{prop population based full para gauss SSS} and Theorem~\ref{prop expectation based full para gauss SSS} can be regarded as the population-based and expectation-based spatial scan statistic, respectively. 

\subsubsection{Case of Bernoulli Model}
For the Bernoulli model, \cite{kulldorff1997} has proposed the classical spatial scan statistic, and \cite{jung2009} re-examined it by considering the regression-based spatial scan statistic. 
\cite{jung2009} considered the following regression model:
\begin{align*}
f (y_{i} | Z_{k,i}; \alpha, \theta_{S_k}  ) =\pi_i^{ y_{i} } (1- \pi_i)^{( 1 - y_i )}  , 
\end{align*}
where $  \log \frac{ \pi_i }{ 1 - \pi_i }  = \alpha + \theta_{S_k} Z_{k,i}  $. 
The corresponding spatial scan statistic is as follows: 
\begin{align}\label{bernoulli original SSS}
& \mathrm{LLR}_k  \nonumber \\
& = \left\{ \left( \sum_{i=1}^N y_i Z_{k,i} \right) \log \frac{ \sum_{i=1}^N y_i Z_{k,i} }{ \sum_{i=1}^N Z_{k,i} 
 } + \left( \sum_{i=1}^N (1 - y_i) Z_{k,i} \right) \log \frac{ \sum_{i=1}^N (1 - y_i) Z_{k,i} }{ \sum_{i=1}^N Z_{k,i} } \right. \nonumber \\
 &  \left.  + \left( \sum_{i=1}^N (1 - Z_{k,i} ) y_i \right) \log \frac{\sum_{i=1}^N (1 - Z_{k,i} ) y_i 
 }{  \sum_{i=1}^N (1 - Z_{k,i} ) } + \left( \sum_{i=1}^N (1 - Z_{k,i} ) ( 1 - y_i ) \right) \log \frac{  \sum_{i=1}^N (1 - Z_{k,i} ) ( 1 - y_i ) }{ \sum_{i=1}^N (1 - Z_{k,i} )  } \right\} \nonumber \\
 & - \left\{  \left( \sum_{i=1}^N y_i \right) \log \frac{ \sum_{i=1}^N y_i }{ N } + \left( \sum_{i=1}^N (1 -  y_i ) \right) \log \frac{ \sum_{i=1}^N (1 - y_i ) }{ N } \right\}.
\end{align}
Similarly to previous examples in the Poisson and Gaussian models, we propose a new spatial scan statistic considering the \textit{regression model without the intercept term $\alpha$}. 
\begin{thm}\label{new exp. bernoulli} 
We consider the following regression model without the intercept term $\alpha$ under the Bernoulli model: 
\begin{align*}
f (y_{i} | Z_{k,i}; 0, \theta_{S_k}  ) =\pi_i^{ y_{i} } (1- \pi_i)^{( 1 - y_i )}  , 
\end{align*}
where $  \log \frac{ \pi_i }{ 1 - \pi_i }  =  \theta_{S_k} Z_{k,i}  $. 
Then, the log-likelihood ratio~\eqref{llr def} yields the new spatial scan statistic. 
\begin{align*}
 \mathrm{LLR}_k & = \left( \sum_{i=1}^N y_i Z_{k,i}  \right) \log \frac{ \sum_{i=1}^N y_i Z_{k,i} }{ \sum_{i=1}^N Z_{k,i} } + \left( \sum_{i=1}^N ( 1 - y_i ) Z_{k,i} \right) \log \frac{ \sum_{i=1}^N ( 1 - y_i ) Z_{k,i} }{ \sum_{i=1}^N Z_{k,i} } \\
& \qquad  +  \left( \sum_{i=1}^N Z_{k,i} \right) \log 2 .
\end{align*}
\end{thm}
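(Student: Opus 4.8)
The plan is to compute both the numerator and the denominator of the log-likelihood ratio~\eqref{llr def} directly, exploiting the fact that dropping the intercept pins several success probabilities to $1/2$. Writing the Bernoulli log-likelihood as $\ell(\theta_{S_k}) = \sum_{i=1}^N \left[ y_i \log \pi_i + (1-y_i)\log(1-\pi_i) \right]$, the first thing I would record is the effect of the intercept-free link $\log\frac{\pi_i}{1-\pi_i} = \theta_{S_k} Z_{k,i}$: for $i \notin S_k$ we have $Z_{k,i}=0$, hence $\pi_i = 1/2$ regardless of $\theta_{S_k}$, while for $i \in S_k$ we have $\pi_i = e^{\theta_{S_k}}/(1+e^{\theta_{S_k}}) =: \pi_{\mathrm{in}}$. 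Under the null $\theta_{S_k}=0$, every $\pi_i$ collapses to $1/2$, so the denominator is parameter-free.

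Next I would evaluate the denominator: with $\pi_i = 1/2$ for all $i$, the null log-likelihood is simply $N \log(1/2)$, with no maximization required. For the numerator I would split the log-likelihood into the cluster and its complement. The complement contributes $(N - n_k)\log(1/2)$, where $n_k := \sum_{i=1}^N Z_{k,i}$, again with nothing to optimize. The cluster contributes $\left( \sum_{i=1}^N y_i Z_{k,i} \right)\log \pi_{\mathrm{in}} + \left( \sum_{i=1}^N (1-y_i) Z_{k,i}\right)\log(1-\pi_{\mathrm{in}})$, which is the only piece depending on $\theta_{S_k}$. Maximizing this over $\pi_{\mathrm{in}} \in (0,1)$ is the standard one-parameter Bernoulli MLE, giving $\hat\pi_{\mathrm{in}} = \left( \sum_{i=1}^N y_i Z_{k,i}\right)/n_k$ and maximized cluster contribution $\left(\sum_{i=1}^N y_i Z_{k,i}\right)\log\frac{\sum_{i=1}^N y_i Z_{k,i}}{n_k} + \left(\sum_{i=1}^N (1-y_i)Z_{k,i}\right)\log\frac{\sum_{i=1}^N (1-y_i)Z_{k,i}}{n_k}$.

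Finally, subtracting the denominator from the maximized numerator, the $\log(1/2)$ pieces combine as $(N-n_k)\log(1/2) - N\log(1/2) = -n_k\log(1/2) = n_k \log 2$, which produces the asserted extra $\left(\sum_{i=1}^N Z_{k,i}\right)\log 2$ term; the two remaining cluster sums are exactly the first two terms of the claimed expression, completing the identification.

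The computation is routine Bernoulli bookkeeping, so I do not expect a genuine analytic obstacle. The one subtlety worth stating carefully is structural rather than algebraic: removing the intercept forces both the null model and the outside-cluster region under the alternative to have success probability \emph{fixed} at $1/2$ rather than estimated, so the only quantity being fit is $\pi_{\mathrm{in}}$. This is the precise analogue of the Poisson and Gaussian cases in Proposition~\ref{prop equiv exp based under poisson} and Proposition~\ref{prop equiv exp based under gauss}, and it is the resulting $\log 2$ baseline term that distinguishes this expectation-based statistic from Kulldorff's population-based statistic~\eqref{bernoulli original SSS}.
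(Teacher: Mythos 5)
Your proof is correct and is essentially the paper's argument: both compute the log-likelihood ratio by maximizing over the single free parameter in the numerator (the null model being parameter-free, as you note), and both identify the same maximizer --- your $\hat\pi_{\mathrm{in}} = \sum_i y_i Z_{k,i}/\sum_i Z_{k,i}$ is exactly the paper's $\hat\theta_{S_k} = \log\bigl(\sum_i y_i Z_{k,i}/\sum_i (1-y_i) Z_{k,i}\bigr)$ under the logit map. The only difference is bookkeeping: the paper works in the logit parameter $\theta_{S_k}$ and must simplify the resulting $\log\{1+\exp(\theta_{S_k} Z_{k,i})\}$ sums by splitting on $Z_{k,i}$, whereas your mean parametrization makes the fact that the out-of-cluster and null probabilities are pinned at $1/2$ (and hence the origin of the $(\sum_i Z_{k,i})\log 2$ term) transparent from the start.
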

The proof is in the Appendix. 
%
By a similar analogy to the case of the Poisson and Gaussian models, the spatial scan statistic in~\eqref{bernoulli original SSS} and Theorem~\ref{new exp. bernoulli} can be regarded as the population-based and expectation-based one, respectively. 
In addition, the difference in the mean structure of the regression model can be interpreted as follows: 
\begin{align*}
\frac{ \pi_i }{ 1- \pi_i } = \exp( \alpha ) \times \exp(\theta_{ S_k } Z_{k,i} ).
\end{align*}
Under no clusters, i.e., $ \theta_{S_k} = 0 $, the baseline value of the odds ratio is $\exp(\alpha)$ and $\exp(0) = 1$ for the population-based and expectation-based spatial scan statistics, respectively.

\section{Spatial Scan Statistics as Penalized Regression Model}
As shown in Sect.~\ref{subsect1 sp scan brief}, the main procedure of spatial scan methods is to find the MLC among potential clusters. 
Recall that the MLC is a potential cluster that attains the maximum value of the log-likelihood ratio~\eqref{llr def}, given by 
\begin{align*}
\mathrm{LLR}_k & =  \log \frac{ \max_{ (\alpha, \theta_{S_{k}}, \bs{\beta}^{\top}   ) \in \Theta }  \Pi_{i=1}^N  f (y_{i} | x_{i}, Z_{k,i}; \alpha, \theta_{S_{k}}, \bs{ \beta }^{\top}  ) }{   \max_{ (\alpha, \bs{\beta}^{\top}  ) \in \Theta_{0} }  \Pi_{i=1}^N   
f (y_{i} | x_{i}, Z_{k,i}; \alpha, 0, \bs{ \beta }^{\top}  ) } \\
& = \log  \max_{ (\alpha, \theta_{S_{k}}, \bs{\beta}^{\top}   ) \in \Theta }  \Pi_{i=1}^N  f (y_{i} | x_{i}, Z_{k,i}; \alpha, \theta_{S_{k}}, \bs{ \beta }^{\top}  ) \\
& \quad - \log  \max_{ (\alpha, \bs{\beta}^{\top}  ) \in \Theta_{0} }  \Pi_{i=1}^N f (y_{i} | x_{i}, Z_{k,i}; \alpha, 0, \bs{ \beta }^{\top}  ) \\
& = \max_{ (\alpha, \theta_{S_{k}}, \bs{\beta}^{\top}   ) \in \Theta } \sum_{i=1}^N \log f (y_{i} | x_{i}, Z_{k,i}; \alpha, \theta_{S_{k}}, \bs{ \beta }^{\top}  ) \\
& \quad - \max_{ (\alpha, \bs{\beta}^{\top}  ) \in \Theta_{0} } \sum_{i=1}^N \log f (y_{i} | x_{i}, Z_{k,i}; \alpha, 0, \bs{ \beta }^{\top}  ).
\end{align*}
The second term does not depend on $S_k$. 
Thus, \eqref{llr max}, which determines the index of the MLC, is rewritten as follows:
\begin{align}\label{modified llr max}
 \argmax_{ k \in \left\{ 0, \ldots, K \right\} } \left\{ \max_{ (\alpha, \theta_{S_{k}}, \bs{\beta}^{\top} ) \in \Theta }  \sum_{i=1}^N \log f (y_{i} | \mb{x}_{i}, Z_{k,i}; \alpha, \theta_{S_{k}}, \bs{ \beta }^{\top} ) \right\},
\end{align}
where $ \theta_{S_k} =0 $ and $  Z_{k,i} \theta_{S_k} = 0 $ for $k=0$. 

\subsection{Equivalence to Regression with $\ell_0$ Penalty}\label{sect equiv to reg with l0}
We consider the following regression model to show the connection between \eqref{modified llr max} and a penalized regression model with the well-known sparse $\ell_0$ penalty. 
\begin{align}\label{extended reg model}
g( \mathrm{E}_{f (y_{i} | \mb{x}_{i},  \left\{ Z_{k,i} \right\}_{k=1}^{K}  ; \mu_{i} )} [ y_{i} ] )= g ( \mu_{i} ) = \alpha + \sum_{k=1}^K \theta_{S_{k}} Z_{k ,i} +  \mb{x}_{i}^{\top} \bs{\beta}.
\end{align}
This formulation~\eqref{extended reg model} appears multiple times in the topic of finding multiple clusters~\citep{takahashi2018,takahashi2020}, such as how many clusters to select in the context of spatial scan statistics, and spatial regression methods for the cluster detection~\citep{XU201659,KAMENETSKY2022100462}. 
However, there are no studies on the relationship between obtaining the MLC, the main procedure for spatial scan statistics. 
Due to this regression model,~\eqref{modified llr max} can be viewed as selecting one potential cluster among $ S_{0}, S_{1}, \ldots , S_{K} $ via~\eqref{extended reg model} as follows: 
\begin{thm}\label{thm reg with l0 pen}
By \eqref{extended reg model} and the $\ell_0$ penalty, \eqref{modified llr max} is equivalent to 
\begin{align}\label{opt problem llr}
\argmax_{ (\alpha, \bs{\theta}^{\top} , \bs{\beta}^{\top} ) } \frac{1}{N} \sum_{i=1}^N \log f (y_{i} | \mb{x}_{i}, \left\{ Z_{k,i} \right\}_{k=1}^{K} ; \alpha, \bs{\theta}^{\top} , \bs{\beta}^{\top}) \ \mbox{ s.t. }  \| \bs{\theta}  \|_0 \leq 1,
\end{align}
where $\bs{\theta} =(\theta_{S_{1}} , \ldots , \theta_{S_{K}})^{\top} $ and $ \| \bs{\theta} \|_0 = \sum_{k=1}^K \mathrm{I} ( \theta_{ S_{k} } \neq 0) $, where $\mathrm{I}$ is the indicator function.
\end{thm}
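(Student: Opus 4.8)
The plan is to exploit the combinatorial structure of the constraint $\| \bs{\theta} \|_0 \leq 1$: it forces at most one coordinate of $\bs{\theta}$ to be nonzero, so the feasible region decomposes into coordinate lines. Concretely, I would define, for each $k \in \{1,\ldots,K\}$, the subspace $V_k = \{ \bs{\theta} \in \mathbb{R}^K : \theta_{S_j} = 0 \text{ for all } j \neq k \}$, and observe that $\{ \bs{\theta} : \| \bs{\theta} \|_0 \leq 1 \} = \bigcup_{k=1}^K V_k$ (a vector with at most one nonzero entry lies on exactly one such line, and the zero vector lies on all of them). Since maximizing over a union of sets equals the maximum of the maxima over each set, the objective in \eqref{opt problem llr} splits as
\begin{align*}
\max_{\| \bs{\theta} \|_0 \leq 1,\, \alpha,\, \bs{\beta}} \frac{1}{N}\sum_{i=1}^N \log f(\cdots) = \max_{k \in \{1,\ldots,K\}} \left\{ \max_{\bs{\theta} \in V_k,\, \alpha,\, \bs{\beta}} \frac{1}{N}\sum_{i=1}^N \log f(\cdots) \right\}.
\end{align*}

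The second step is to show that each inner maximization coincides with the corresponding single-cluster term in \eqref{modified llr max}. Fixing $\bs{\theta} \in V_k$ annihilates every summand $\theta_{S_j} Z_{j,i}$ with $j \neq k$ in the mean structure \eqref{extended reg model}, so it collapses to $g(\mu_i) = \alpha + \theta_{S_k} Z_{k,i} + \mb{x}_i^{\top} \bs{\beta}$, which is exactly the single-cluster regression model \eqref{reg model}. Consequently the restricted density equals $f(y_i \mid \mb{x}_i, Z_{k,i}; \alpha, \theta_{S_k}, \bs{\beta}^{\top})$, and the inner maximum over $V_k$ matches $\max_{(\alpha,\theta_{S_k},\bs{\beta}^{\top}) \in \Theta} \sum_{i=1}^N \log f(\cdots)$ up to the overall positive factor $1/N$, which is immaterial for the $\argmax$.

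It then remains to reconcile the indexing, including the empty-cluster case. Here I would note that $\mathbf{0} \in V_k$ for every $k$, so the null fit with $\theta_{S_k} = 0$ is always available inside each inner maximization; the solution $\bs{\theta}^{*} = \mathbf{0}$ therefore corresponds precisely to $S_0$ (and, because the alternative nests the null, each $V_k$-maximum dominates the $k=0$ term, so $S_0$ is selected only in the degenerate case of a tie). The maximizing index $k$ in \eqref{opt problem llr} thus agrees with the one in \eqref{modified llr max}, and the support of the optimizer $\bs{\theta}^{*}$ recovers the MLC through \eqref{llr max}.

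The main obstacle is not analytic but a matter of bookkeeping: the two formulations return different objects — \eqref{modified llr max} returns an index $k^{*}$, whereas \eqref{opt problem llr} returns a parameter vector — so the care goes into stating that \emph{equivalence} means the support of the optimal $\bs{\theta}^{*}$ recovers $k^{*}$, and into verifying that the $\ell_0$ feasible set (which permits $\theta_{S_k} = 0$) matches the per-cluster maximization in \eqref{modified llr max} (which likewise ranges over $\theta_{S_k} \in \mathbb{R}$), so that the decomposition $\bigcup_{k} V_k$ loses no candidate and the overlap at $\mathbf{0}$ causes no double-counting of the optimal value.
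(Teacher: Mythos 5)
Your proposal is correct and follows essentially the same route as the paper's own proof: both rest on the observation that the feasible set $\{\bs{\theta} : \|\bs{\theta}\|_0 \leq 1\}$ splits into the all-zero case (recovering $S_0$) and the single-nonzero-coordinate cases (recovering each $S_k$ with its single-cluster likelihood). Your version merely spells out what the paper leaves implicit — the union decomposition into the lines $V_k$, the collapse of \eqref{extended reg model} to \eqref{reg model} on each $V_k$, and the identification of the optimizer's support with the index $k^*$ — all of which are sound.
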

\begin{proof}
Feasible solutions under the constraint $\| \bs{\theta} \|_{0} \leq 1 $ imply that all elements of $\bs{ \theta}$ are zeros, or one element of $\bs{ \theta }$ is non-zero. 
The former indicates that the MLC is $S_0$, i.e., there is no cluster among potential clusters. 
The latter indicates that the MLC is any of $ \left\{ S_1, \ldots, S_K \right\} $ with the largest log-likelihood value. 
\end{proof}
Note that, for finding the MLC, the original definition of MLC~\eqref{llr max} is equivalent to~\eqref{opt problem llr}. 
This theorem does not imply that it includes obtaining the p-value to evaluate the statistical significance of the obtained MLC via the Monte Carlo hypothesis testing, which has been used in ordinal spatial scan statistics. 
By reformulating the optimization problem with the sparse penalty~\eqref{opt problem llr}, we can apply another method of statistical inference for calculating the p-value, such as the post-selection inference~\citep{lee2016exact,tibshirani2016exact} and the bootstrap-based method~\citep{10.1214/14-AOS1221}. 
From a theoretical perspective, the consistency and asymptotic normality of the estimator for spatial scan statistics can be investigated in the context of sparse regression.

In addition, we can easily show a relation to existing spatial scan methods for finding \textit{non-overlapping} multiple clusters~\citep{Zhang2010SpatialSS,li2011}.
\begin{cor}\label{corr multiple zones}
Under given non-overlapping potential clusters, i.e., $S_{l} \cap S_{m} = \emptyset  \mbox{ for any } 1 \leq l, m \leq K$, one can select multiple zones by changing the strength of the penalty in \eqref{opt problem llr}, i.e., $  \| \bs{\theta} \|_0 \leq c$, where $c \geq 2$. 
\end{cor}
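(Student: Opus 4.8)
The plan is to extend the combinatorial argument from the proof of Theorem~\ref{thm reg with l0 pen} from a single active cluster to at most $c$ active clusters, with the non-overlap condition $S_l \cap S_m = \emptyset$ playing the role of guaranteeing that the selected zones are well-defined and mutually disjoint. First I would characterize the feasible set of \eqref{opt problem llr} under the relaxed constraint $\| \bs{\theta} \|_0 \leq c$: any feasible $\bs{\theta}$ has at most $c$ nonzero entries, so it is indexed by an active set $A \subseteq \{ 1, \ldots, K \}$ with $|A| \leq c$, where $\theta_{S_k} \neq 0$ exactly for $k \in A$. Choosing $A$ is precisely choosing which potential clusters to declare as detected zones, so the cases $|A| = 0, 1, \ldots, c$ recover respectively the empty set $S_0$, the single MLC of Theorem~\ref{thm reg with l0 pen}, and successively larger collections of zones.

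Next I would invoke the non-overlap condition to collapse the additive term in \eqref{extended reg model}. Because $S_l \cap S_m = \emptyset$ for all $l \neq m$, every region $i$ lies in at most one cluster, so for a feasible $\bs{\theta}$ the sum $\sum_{k=1}^K \theta_{S_k} Z_{k,i}$ reduces to the single term $\theta_{S_{k(i)}}$ when $i$ belongs to an active cluster $S_{k(i)}$ with $k(i) \in A$, and to $0$ otherwise. Hence each region's mean is shifted by at most one cluster effect, the model \eqref{extended reg model} remains well specified, and the zones indexed by $A$ are automatically non-overlapping. The objective in \eqref{opt problem llr} therefore attains its maximum at the active set $A$ (together with the associated $\alpha$, $\{ \theta_{S_k} \}_{k \in A}$, and $\bs{\beta}$) that yields the largest log-likelihood among all collections of at most $c$ non-overlapping clusters, which is exactly the multiple-cluster analogue of the MLC. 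This is the sense in which increasing $c$ selects multiple zones and connects with the existing non-overlapping multiple-cluster detection methods of~\citep{Zhang2010SpatialSS,li2011}.

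The step I expect to require the most care is the coupling introduced by the shared intercept $\alpha$ and the covariate coefficients $\bs{\beta}$. For a fixed active set $A$ the cluster parameters $\{ \theta_{S_k} \}_{k \in A}$ decouple across regions thanks to the disjoint supports of the $Z_{k,i}$ and can be profiled out independently given $(\alpha, \bs{\beta})$, but the resulting profile likelihood still depends jointly on $\alpha$ and $\bs{\beta}$, so one cannot in general claim that the optimal size-$c$ collection equals the $c$ clusters with the largest individual log-likelihood gains. For the statement as given---that \eqref{opt problem llr} with $c \geq 2$ selects multiple non-overlapping zones---the combinatorial characterization of the feasible set together with the disjointness reduction suffices, and I would only push the sharper ``top-$c$'' separation argument in the intercept-free (expectation-based) models considered earlier, where $\alpha$ is absent and the profile likelihood splits cleanly over clusters.
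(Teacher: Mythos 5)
Your proposal is correct and follows essentially the same route as the paper: the paper states Corollary~\ref{corr multiple zones} without a separate proof, treating it as the immediate extension of the feasibility argument in Theorem~\ref{thm reg with l0 pen} (feasible $\bs{\theta}$ under $\|\bs{\theta}\|_0 \leq c$ have at most $c$ nonzero entries, each nonzero entry flagging a selected zone, with disjointness of the $S_k$ keeping the additive term in \eqref{extended reg model} well behaved), which is exactly the argument you give. Your added caveat that the shared $(\alpha, \bs{\beta})$ prevents reading the optimal size-$c$ collection as the top-$c$ individually best clusters is a correct and worthwhile refinement, but it goes beyond what the statement claims and what the paper's implicit proof addresses.
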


We consider the following regression model \eqref{extended reg model} with time index $t$ for spatio-temporal data. 
\begin{align*}
g( \mathrm{E}_{f (y_{t,i} | \mb{x}_{t,i},  \left\{ Z_{t,k,i} \right\}_{k=1}^K   ; \mu_{t,i} )} [ y_{t,i} ] )= g ( \mu_{t,i} ) = \alpha_{t} + \sum_{k=1}^K \theta_{S_{t,k}} Z_{t,k,i} +  \mb{x}_{t,i}^{\top} \bs{\beta}_{t}.
\end{align*}
Then, \eqref{opt problem llr} is easily extended to a space-time scan method as follows: 
\begin{align*}
\argmax_{ ( \bs{\alpha}^{\top}, \bs{\theta}^{\top} , \bs{\beta}^{\top} ) } \frac{1}{N \times T} \sum_{t=1}^T \sum_{i=1}^N \log f (y_{t,i} | \mb{x}_{t,i},  \left\{ Z_{t,k,i} \right\}_{k=1}^K ; \bs{\alpha}^{\top}, \bs{\theta}^{\top} , \bs{\beta}^{\top}) \ \mbox{ s.t. }  \| \bs{\theta}  \|_0 \leq 1,
\end{align*}
where $T$ is the time period, $ \bs{\alpha} = ( \alpha_{1} , \ldots , \alpha_{T} )^{\top}  $,  $\bs{\theta} =(\theta_{S_{1,1}} , \ldots, \theta_{S_{1,K}} ,  \ldots , \theta_{ S_{T,1} } , \ldots , \theta_{S_{T,K}})^{\top} $, and $\bs{\beta} = ( \bs{\beta}_{1}^{\top} , \ldots, \bs{\beta}_{T}^{\top}  )^{\top} $.
By setting some parts of $\bs{\alpha}$ and $\bs{\theta}$ to be the same, e.g., $\alpha_1 = \alpha_2 = \alpha_3 $ and $\theta_{S_{1,1}} = \theta_{S_{2,1}} = \theta_{S_{3,1}}$, respectively, our formulation without covariates corresponds to the existing space-time scan statistic such as the cylindrical space-time scan statistic~\citep{kull2001}.

\subsection{Connection to Existing Spatial Regression Method for Cluster Detection}\label{sect related work}
Here, we show existing spatial cluster detection methods related to the regression model~\eqref{extended reg model} and optimization problem~\eqref{opt problem llr}. 

\cite{XU201659} adopted forward step-wise and stage-wise approaches with several stopping criteria such as AIC instead of directly solving \eqref{opt problem llr}. 
\cite{KAMENETSKY2022100462} considered the Lagrange relaxation form replaced by $\ell_1$ penalty instead of the $\ell_0$ penalty in \eqref{opt problem llr} as follows: 
\begin{align}\label{kame proposed} 
\argmax_{ (\alpha, \bs{\theta}^{\top} , \bs{\beta}^{\top} ) } \frac{1}{N} \sum_{i=1}^N \log f (y_{i} | \mb{x}_{i},  \left\{ Z_{k,i} \right\}_{k=1}^K  ; \alpha, \bs{\theta}^{\top} , \bs{\beta}^{\top}) - \lambda  \| \bs{\theta}  \|_1 ,
\end{align}
where $\lambda$ is a tuning parameter and $ \| \bs{\theta} \|_1 = \sum_{k=1}^K | \theta_{ S_{k} } | $. 
To solve \eqref{kame proposed}, \cite{KAMENETSKY2022100462} adopted a general optimization approach such as a coordinate descent~\citep{JSSv033i01}. 

\cite{wang2014} considered the following regression model. 
\begin{align}\label{wang reg model}
g( \mathrm{E}_{f (y_{i} | x_{i} ; \mu_{i} )} [ y_{i} ] )= g ( \mu_{i} ) = \alpha +  \theta_{i} +  \mb{x}_{i}^{\top} \bs{\beta}. 
\end{align}
It can be regarded as the spatial scan under the case of potential clusters, which consist of each region, i.e., $S_1 = \left\{ 1 \right\}, S_2 = \left\{ 2 \right\}, \cdots, S_N = \left\{ N \right\} $ in \eqref{extended reg model}. 
To obtain the MLC, they consider the following optimization problem. 
\begin{align}\label{wang proposed opt}
\argmax_{ (\alpha, \theta_{1}, \ldots, \theta_{N}, \bs{\beta}^{\top} ) } \frac{1}{N} \sum_{i=1}^N \log f (y_{i} | \mb{x}_{i}; \theta_i , \bs{\beta}^{\top} ) - \lambda \| \bs{\theta}  \|_1  - \lambda^{\prime} \sum_{(j,l ) \in \mathcal{E}} | \theta_{j} - \theta_{l} | ,
\end{align}
where $\lambda^{\prime}$ is a tuning parameter for graph-fused lasso~\citep{tib2005}, and $\mathcal{E}$ is the set of edges in undirected graph $\mathcal{G}$, which consists of adjacency matrix of entities $(i=1, \ldots, N)$. %
\cite{choi2018} developed an estimation algorithm to treat with  $\ell_1$ penalty of coefficients for covariates, i.e.,  $\| \bs{\beta} \|_1$, in \eqref{wang proposed opt} by using the idea of Majorization-Minimization algorithm~\citep{hunter2004}. 
Regarding statistical inference, it is generally challenging to construct confidence intervals in penalized regression models. 
To overcome such a problem, a Bayesian formulation of \eqref{wang proposed opt} was proposed by \cite{ryo2022}.

\subsection{Spatial Scan Statistics Meet Robust Statistics}\label{sect spatial scan robust}
We consider the potential clusters consisting of each region, i.e., $S_1 = \left\{ 1 \right\}, S_2 = \left\{ 2 \right\}, \cdots, S_N = \left\{ N \right\} $. 
Recall that the following regression model \eqref{wang reg model}:
\begin{align*}
g( \mathrm{E}_{f (y_{i} | x_{i} ; \mu_{i} )} [ y_{i} ] )= g ( \mu_{i} ) = \alpha +  \theta_{i} +  \mb{x}_{i}^{\top} \bs{\beta}. 
\end{align*}
For simplicity, we consider the case of the Gaussian model with fixed variance $ \sigma^2 $, i.e., 
\begin{align*}
f(y_{i} | x_{i} ; \mu_{i} ) =  \frac{1}{ \sqrt{2 \pi \sigma^2 } } \exp \left\{ - \frac{ ( y_i -  \alpha -  \theta_{i} -  \mb{x}_{i}^{\top} \bs{\beta}  )^2   }{ 2 \sigma^2 } \right\} .
\end{align*}
Here we focus on the parameter $\mb{\theta} = (\theta_1, \ldots, \theta_N )^{\top} $. 
Under these settings, the Lagrange relaxation form of the optimization problem for obtaining the MLC is written as follows:
\begin{align}\label{opt mean shift para}
\argmax_{ (\theta_1, \ldots, \theta_N) } - \frac{1}{N} \sum_{i=1}^{N} \left(  \tilde{y}_i -  \theta_{i}   \right)^2  - \lambda \| \mb{\theta} \|_0 + \mbox{others},
\end{align}
where $\tilde{y}_i =  y_i - \alpha - \mb{x}_{i}^{\top} \bs{\beta} $ and others does not depend on the parameter $\mb{\theta}$. 
As pointed out in~\cite{YGchoi2022}, a parameter $\theta_i$ seems to be a case-specific parameter investigated by~\cite{she2011} and~\cite{10.1214/11-STS377} as absorbing harmful effects of outlying $y_i$.
We explain intuitively why the above formulation becomes robust against outliers. 
Let $(\alpha^*, {\bs{\beta}^* }^{\top} )$ be true regression coefficients. 
Outlying $y_i$ means $ | y_i - \alpha^* - \mb{x}_{i}^{\top} \bs{\beta}^* | \gg 0 $, then, the mean shift parameter $\theta_i$ can reduce it, and $ \| \mb{\theta} \|_0 $ controls the number of outlying $y$. 
Therefore, the baseline $\alpha + \mb{x}_{i}^{\top} \bs{\beta}$ can be correctly estimated, even under the presence of outliers. 

The more rigorous explanation related to robust statistics is as follows. 
By following the formulation in~\cite{she2011}, the above optimization problem~\eqref{opt mean shift para} yields in 
\begin{align*}
\argmin_{ (\alpha, \mb{\beta}^{\top} ) }  \frac{1}{N} \sum_{i=1}^{N} \Psi \left(  y_i - \alpha - \mb{x}_{i}^{\top} \bs{\beta} ; \lambda \right)  ,
\end{align*}
where $\Psi (a;\lambda)  = a^2 \mbox{ if } |a| \leq \lambda \mbox{ and } = 0 \mbox{ otherwise} $.
The function $\Psi$ is known as skipped-mean loss and has robustness against outliers. 
It has been shown that such a relationship to robustness holds for another penalty, e.g., $\ell_1$ penalty corresponds to the Huber loss (see,~\cite{she2011} and~\cite{10.1214/11-STS377} for details). 
In addition, one can easily extend the above relation to other regression models, such as GLMs~\citep{10.1214/11-STS377} and non-overlapping potential clusters, i.e., $S_{l} \cap S_{m} = \emptyset $. 
Returning to spatial scan statistics, $\theta_i \neq 0$ indicates the possibility of being a cluster, and it is an outlier in terms of robust statistics. 
Therefore, most spatial scanning methods detect outliers, and any of them can be the MLC. 
%

\section{Application to Real Data}\label{sect appl real data}
We conducted numerical experiments using the New York Leukemia dataset, which is available in R package \texttt{smerc}. 
The dataset contains 281 observations from an 8-county area in the state of New York and its geographical information served as the spatial framework for our simulations. 
Since true disease clusters are typically unknown in real-world scenarios, we artificially create cluster scenarios (hot and cold spots) for the purpose of validation. 
Figure 1 shows a map of our cluster scenarios. 
\begin{figure}[!ht]
    \centering
    \includegraphics[width=0.4\textwidth]{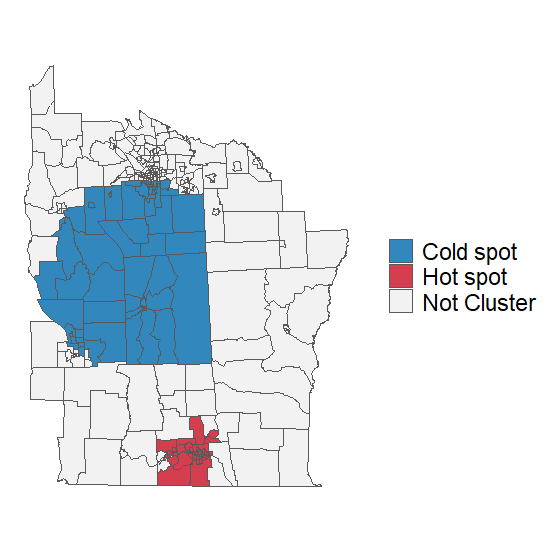} %
    \caption{Cluster scenarios for hot and cold spots} 
    \label{fig:1} 
\end{figure}
The number of observations is 40 and 61 for the hot and cold spots, respectively. 
Disease occurrence counts were randomly generated based on the normal distribution, with two distinct data generation schemes: 
\begin{description}
\item[Population-based setting:] $ y_i \sim \mathrm{N} ( \alpha_{pop} + \theta_{ pop } Z_{k,i} , \sigma_{ pop }) $,
\item[Expectation-based setting:] $ y_i \sim \mathrm{N} (  \theta_{ expe } Z_{k,i} , \sigma_{ expe }) $,
\end{description}
where $\mathrm{N}$ is the normal distribution, $\alpha_{pop} = 5, \theta_{pop} = \theta_{expe} = 5 \mbox{ for hot spot }, -5 \mbox{ for cold spot}, \mbox{ and } \sigma_{pop} = \sigma_{expe} = 0.5$.
Our comparative analysis focused on two spatial scan statistics: The newly proposed spatial scan statistic in Theorem~\ref{prop expectation based full para gauss SSS} and Kulldorff's spatial scan statistic (2009) in Proposition~\ref{prop population based full para gauss SSS} (Kull2009). 
As discussed in Sect.~\ref{connect ex scan gauss model}, the former can be considered an expectation-based spatial scan statistic, while the latter is a population-based spatial scan statistic. 
Potential clusters were constructed using the circular-based method, and the number of potential clusters was $22548$. 
For each simulation setting, we calculated log-likelihood ratio values for all potential clusters and identified the top three potential clusters: the MLC, the second likely cluster, and the third likely cluster. 
In addition, for the Kull2009, we showed the estimates of the intercept term $\hat{\alpha} = \frac{ \sum_{i=1}^{N} (1-Z_{k,i}) y_i  }{ N - \sum_{i=1}^{N} Z_{k,i} } $ in the numerator of the log-likelihood ratio for all potential clusters $S_k$ in the expectation-based setting. 
This represents a deviation from the true value of the intercept term, which should be zero in an expectation-based setting and is likely to affect statistical inferences, such as when calculating p-values in Monte Carlo hypothesis tests. 
The primary objective of this experiment is to investigate their behavior under the specified data scenarios and not to determine which of the two methods is superior. 

Figures 2 and 3 visualize these results, highlighting whether the artificially defined ground truth clusters were detected. 
\begin{figure}[!ht]
    \centering
    \includegraphics[width=\textwidth]{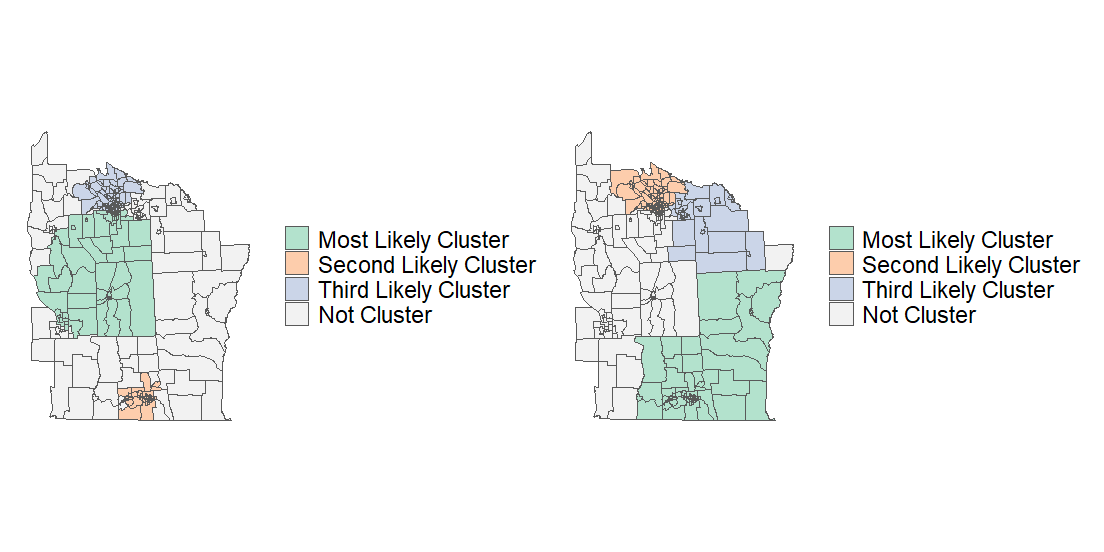} %
    \caption{Results for population-based setting: Kull2009 (Left) and Proposed (Right)} 
    \label{fig:numericalexpe1} 
\end{figure}
\begin{figure}[!ht]
    \centering
    \includegraphics[width=\textwidth]{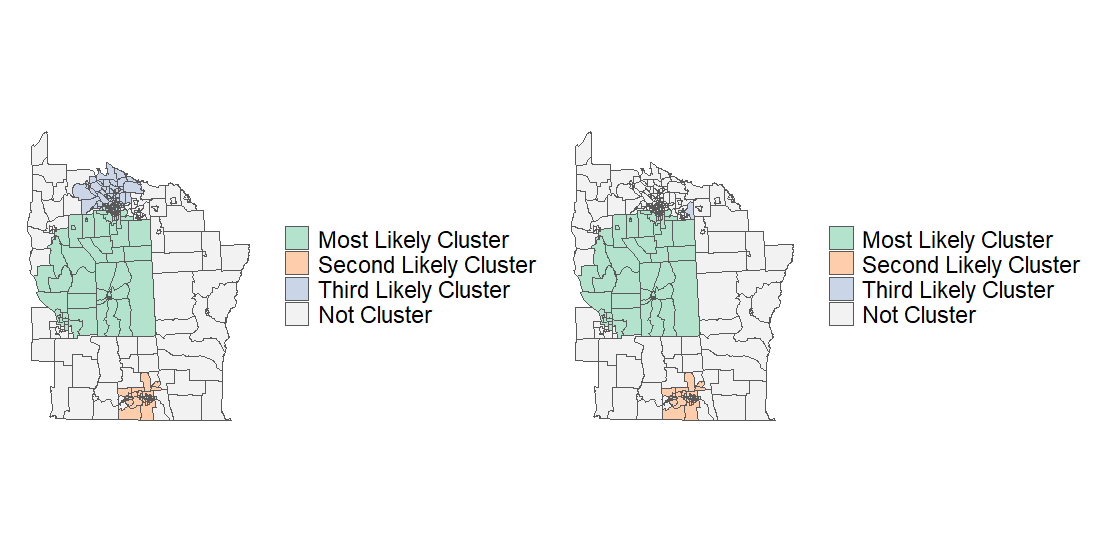} %
    \caption{Results for expectation-based setting: Kull2009 (Left) and Proposed (Right)}
    \label{fig:numericalexpe2} %
\end{figure}
For the population-based setting in Figure 2, our proposed spatial scan statistic can detect the hot spot as part of the MLC, i.e., including irrelevant regions, but fails to detect the cold spot. 
On the other hand, the Kull2009 can detect them correctly. 
For the expectation-based setting in Figure 3, both methods can detect true clusters. 
However, Figure 4 shows that the histogram of the estimated value of the intercept term for the Kull2009 has shifted from its true value of zero. 
\begin{figure}[!ht]
    \centering
    \includegraphics[width=0.5\textwidth]{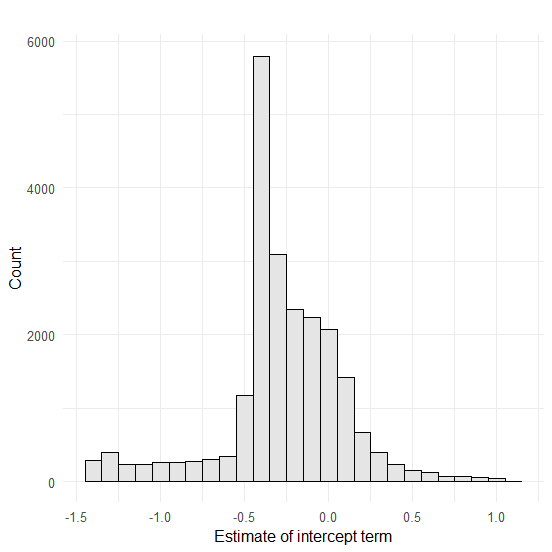} %
    \caption{Histgram of estimates of the intercept term for Kull2009 under expectation-based setting} 
    \label{fig:hist} 
\end{figure}
It also implies that the estimation of $\theta$, which is of interest for the statistical inference, is affected. 
It indicates that one should select the appropriate selection of spatial scan statistics for the type of data that are analyzed, as stated in~\cite{neillphdthesis}. 
%

\section{Conclusion}\label{sect conclusion} 
By analyzing regression-based spatial scan statistics, we clarified the theoretical distinctions between population-based and expectation-based approaches, which had only been partially explored. 
The key difference lies in the presence or absence of an intercept term in the regression model. 
Building on these findings, we developed new expectation-based spatial scan statistics for models such as the normal and Bernoulli distributions. 
Additionally, we formulated an equivalent optimization problem for obtaining the MLC by introducing an extended regression model with an $\ell_0$ penalty. 
The proposed framework can be readily extended to space-time scan statistics and methods for detecting multiple clusters. 
By framing the problem as a penalized regression, we found that the proposed approach is closely related to various existing advanced spatial regression-based cluster detection methods. 
In the context of robust statistics for handling outliers, it was also revealed that clusters in spatial scan statistics correspond to outliers in robust regression methods. 
Numerical experiments on real-world data demonstrated different behaviors of two spatial scan statistics, including the proposed one under specified data scenarios. 
It again suggests the importance of selecting spatial scan statistics appropriate for the nature of the data, as already stated in~\cite{neillphdthesis}. 
However, certain properties, such as those stated in Corollary~\ref{corr multiple zones}, remain unverified in cases involving overlapping potential clusters.
Addressing these limitations is a subject for future research.

\section*{Acknowledgments}
This work was partially supported by a KAKENHI Grant-in-Aid for Young Scientists (22K17859),

%
\bibliography{biomtemplate.bib}

\begin{thebibliography}{48}
\providecommand{\natexlab}[1]{#1}
\providecommand{\url}[1]{\texttt{#1}}
\expandafter\ifx\csname urlstyle\endcsname\relax
  \providecommand{\doi}[1]{doi: #1}\else
  \providecommand{\doi}{doi: \begingroup \urlstyle{rm}\Url}\fi

\bibitem[Kulldorff(1997)]{kulldorff1997}
Martin Kulldorff.
\newblock A spatial scan statistic.
\newblock \emph{Communications in Statistics - Theory and Methods}, 26\penalty0 (6):\penalty0 1481--1496, 1997.
\newblock \doi{10.1080/03610929708831995}.
\newblock URL \url{https://doi.org/10.1080/03610929708831995}.

\bibitem[Jung(2009)]{jung2009}
Inkyung Jung.
\newblock A generalized linear models approach to spatial scan statistics for covariate adjustment.
\newblock \emph{Statistics in Medicine}, 28\penalty0 (7):\penalty0 1131--1143, 2009.
\newblock \doi{https://doi.org/10.1002/sim.3535}.
\newblock URL \url{https://onlinelibrary.wiley.com/doi/abs/10.1002/sim.3535}.

\bibitem[Zhang and Lin(2009{\natexlab{a}})]{ZHANG20092851}
Tonglin Zhang and Ge~Lin.
\newblock Spatial scan statistics in loglinear models.
\newblock \emph{Computational Statistics \& Data Analysis}, 53\penalty0 (8):\penalty0 2851--2858, 2009{\natexlab{a}}.
\newblock ISSN 0167-9473.
\newblock \doi{https://doi.org/10.1016/j.csda.2008.09.016}.
\newblock URL \url{https://www.sciencedirect.com/science/article/pii/S0167947308004490}.

\bibitem[Neill and Moore(2005)]{neill2005anomalous}
Daniel~B Neill and Andrew~W Moore.
\newblock Anomalous spatial cluster detection.
\newblock In \emph{KDD 2005 Workshop on Data Mining Methods for Anomaly Detection}, pages 41--44, 2005.

\bibitem[Neill(2006)]{neillphdthesis}
Daniel~B. Neill.
\newblock \emph{Detection of Spatial and Spatio-Temporal Clusters}.
\newblock PhD thesis, Carnegie Mellon University, USA, 2006.
\newblock AAI3229869.

\bibitem[She and Owen(2011)]{she2011}
Yiyuan She and Art~B. Owen.
\newblock Outlier detection using nonconvex penalized regression.
\newblock \emph{Journal of the American Statistical Association}, 106\penalty0 (494):\penalty0 626--639, 2011.
\newblock \doi{10.1198/jasa.2011.tm10390}.
\newblock URL \url{https://doi.org/10.1198/jasa.2011.tm10390}.

\bibitem[Lee et~al.(2012)Lee, MacEachern, and Jung]{10.1214/11-STS377}
Yoonkyung Lee, Steven~N. MacEachern, and Yoonsuh Jung.
\newblock {Regularization of Case-Specific Parameters for Robustness and Efficiency}.
\newblock \emph{Statistical Science}, 27\penalty0 (3):\penalty0 350 -- 372, 2012.
\newblock \doi{10.1214/11-STS377}.
\newblock URL \url{https://doi.org/10.1214/11-STS377}.

\bibitem[Kulldorff and Nagarwalla(1995)]{kull1995}
Martin Kulldorff and Neville Nagarwalla.
\newblock Spatial disease clusters: Detection and inference.
\newblock \emph{Statistics in Medicine}, 14\penalty0 (8):\penalty0 799--810, 1995.
\newblock \doi{https://doi.org/10.1002/sim.4780140809}.
\newblock URL \url{https://onlinelibrary.wiley.com/doi/abs/10.1002/sim.4780140809}.

\bibitem[Dwass(1957)]{10.1214/aoms/1177707045}
Meyer Dwass.
\newblock {Modified Randomization Tests for Nonparametric Hypotheses}.
\newblock \emph{The Annals of Mathematical Statistics}, 28\penalty0 (1):\penalty0 181 -- 187, 1957.
\newblock \doi{10.1214/aoms/1177707045}.
\newblock URL \url{https://doi.org/10.1214/aoms/1177707045}.

\bibitem[Wheeler(2007)]{lukemia2007}
David Wheeler.
\newblock A comparison of spatial clustering and cluster detection techniques for childhood leukemia incidence in ohio, 1996 – 2003.
\newblock \emph{International journal of health geographics}, 6:\penalty0 13, 02 2007.
\newblock \doi{10.1186/1476-072X-6-13}.

\bibitem[Kulldorff and Hjalmars(1999)]{lungcancer1999}
Martin Kulldorff and Ulf Hjalmars.
\newblock The knox method and other tests for space-time interaction.
\newblock \emph{Biometrics}, 55\penalty0 (2):\penalty0 544--552, 1999.
\newblock ISSN 0006341X, 15410420.
\newblock URL \url{http://www.jstor.org/stable/2533804}.

\bibitem[Hohl et~al.(2020)Hohl, Delmelle, Desjardins, and Lan]{covid1}
Alexander Hohl, Eric~M. Delmelle, Michael~R. Desjardins, and Yu~Lan.
\newblock Daily surveillance of covid-19 using the prospective space-time scan statistic in the united states.
\newblock \emph{Spatial and Spatio-temporal Epidemiology}, 34:\penalty0 100354, 2020.
\newblock ISSN 1877-5845.
\newblock \doi{https://doi.org/10.1016/j.sste.2020.100354}.

\bibitem[Desjardins et~al.(2020)Desjardins, Hohl, and Delmelle]{covid2}
M.R. Desjardins, A.~Hohl, and E.M. Delmelle.
\newblock Rapid surveillance of covid-19 in the united states using a prospective space-time scan statistic: Detecting and evaluating emerging clusters.
\newblock \emph{Applied Geography}, 118:\penalty0 102202, 2020.
\newblock ISSN 0143-6228.
\newblock \doi{https://doi.org/10.1016/j.apgeog.2020.102202}.
\newblock URL \url{https://www.sciencedirect.com/science/article/pii/S0143622820303039}.

\bibitem[Cordes and Castro(2020)]{covid3}
Jack Cordes and Marcia~C. Castro.
\newblock Spatial analysis of covid-19 clusters and contextual factors in new york city.
\newblock \emph{Spatial and Spatio-temporal Epidemiology}, 34:\penalty0 100355, 2020.
\newblock ISSN 1877-5845.
\newblock \doi{https://doi.org/10.1016/j.sste.2020.100355}.
\newblock URL \url{https://www.sciencedirect.com/science/article/pii/S1877584520300332}.

\bibitem[Castro et~al.(2021)Castro, Kim, Barberia, Ribeiro, Gurzenda, Ribeiro, Abbott, Blossom, Rache, and Singer]{covid4}
Marcia~C. Castro, Sun Kim, Lorena Barberia, Ana~Freitas Ribeiro, Susie Gurzenda, Karina~Braga Ribeiro, Erin Abbott, Jeffrey Blossom, Beatriz Rache, and Burton~H. Singer.
\newblock Spatiotemporal pattern of covid-19 spread in brazil.
\newblock \emph{Science}, 372\penalty0 (6544):\penalty0 821--826, 2021.
\newblock \doi{10.1126/science.abh1558}.

\bibitem[Yoneoka et~al.(2020)Yoneoka, Tanoue, Kawashima, Nomura, Shi, Eguchi, Ejima, Taniguchi, Sakamoto, Kunishima, Gilmour, Nishiura, and Miyata]{covid5}
Daisuke Yoneoka, Yuta Tanoue, Takayuki Kawashima, Shuhei Nomura, Shoi Shi, Akifumi Eguchi, Keisuke Ejima, Toshibumi Taniguchi, Haruka Sakamoto, Hiroyuki Kunishima, Stuart Gilmour, Hiroshi Nishiura, and Hiroaki Miyata.
\newblock Large-scale epidemiological monitoring of the covid-19 epidemic in tokyo.
\newblock \emph{The Lancet Regional Health - Western Pacific}, 3, October 2020.
\newblock ISSN 2666-6065.
\newblock \doi{10.1016/j.lanwpc.2020.100016}.

\bibitem[Kulldorff et~al.(2006)Kulldorff, Huang, Pickle, and Duczmal]{https://doi.org/10.1002/sim.2490}
Martin Kulldorff, Lan Huang, Linda Pickle, and Luiz Duczmal.
\newblock An elliptic spatial scan statistic.
\newblock \emph{Statistics in Medicine}, 25\penalty0 (22):\penalty0 3929--3943, 2006.
\newblock \doi{https://doi.org/10.1002/sim.2490}.
\newblock URL \url{https://onlinelibrary.wiley.com/doi/abs/10.1002/sim.2490}.

\bibitem[Tango and Takahashi(2005)]{tango2005flexibly}
Toshiro Tango and Kunihiko Takahashi.
\newblock A flexibly shaped spatial scan statistic for detecting clusters.
\newblock \emph{International journal of health geographics}, 4\penalty0 (1):\penalty0 11, 2005.

\bibitem[Tanoue et~al.(2023)Tanoue, Yoneoka, Kawashima, Uryu, Nomura, Eguchi, Makiyama, and Matsuura]{tanoue2023publictrans}
Yuta Tanoue, Daisuke Yoneoka, Takayuki Kawashima, Shinya Uryu, Shuhei Nomura, Akifumi Eguchi, Koji Makiyama, and Kentaro Matsuura.
\newblock Public transportation network scan for rapid surveillance.
\newblock \emph{Biostatistics \& Epidemiology}, 7\penalty0 (1):\penalty0 e2069458, 2023.
\newblock \doi{10.1080/24709360.2022.2069458}.

\bibitem[Neill et~al.(2005)Neill, Moore, and Cooper]{NIPS2005_28acfe2d}
Daniel Neill, Andrew Moore, and Gregory Cooper.
\newblock A bayesian spatial scan statistic.
\newblock In Y.~Weiss, B.~Sch\"{o}lkopf, and J.~Platt, editors, \emph{Advances in Neural Information Processing Systems}, volume~18. MIT Press, 2005.

\bibitem[Han et~al.(2019)Han, Matheny, and Phillips]{10.1145/3347146.3359101}
Mingxuan Han, Michael Matheny, and Jeff~M. Phillips.
\newblock The kernel spatial scan statistic.
\newblock In \emph{Proceedings of the 27th ACM SIGSPATIAL International Conference on Advances in Geographic Information Systems}, SIGSPATIAL '19, page 349–358, New York, NY, USA, 2019. Association for Computing Machinery.
\newblock ISBN 9781450369091.
\newblock \doi{10.1145/3347146.3359101}.
\newblock URL \url{https://doi.org/10.1145/3347146.3359101}.

\bibitem[Huang et~al.(2007)Huang, Kulldorff, and Gregorio]{10.1111/j.1541-0420.2006.00661.x}
Lan Huang, Martin Kulldorff, and David Gregorio.
\newblock {A Spatial Scan Statistic for Survival Data}.
\newblock \emph{Biometrics}, 63\penalty0 (1):\penalty0 109--118, 04 2007.
\newblock ISSN 0006-341X.
\newblock \doi{10.1111/j.1541-0420.2006.00661.x}.
\newblock URL \url{https://doi.org/10.1111/j.1541-0420.2006.00661.x}.

\bibitem[Cook et~al.(2007)Cook, Gold, and Li]{10.1111/j.1541-0420.2006.00714.x}
Andrea~J. Cook, Diane~R. Gold, and Yi~Li.
\newblock {Spatial Cluster Detection for Censored Outcome Data}.
\newblock \emph{Biometrics}, 63\penalty0 (2):\penalty0 540--549, 01 2007.
\newblock ISSN 0006-341X.
\newblock \doi{10.1111/j.1541-0420.2006.00714.x}.
\newblock URL \url{https://doi.org/10.1111/j.1541-0420.2006.00714.x}.

\bibitem[Lee and Jung(2019)]{LEE201928}
Myeonggyun Lee and Inkyung Jung.
\newblock Modified spatial scan statistics using a restricted likelihood ratio for ordinal outcome data.
\newblock \emph{Computational Statistics \& Data Analysis}, 133:\penalty0 28--39, 2019.
\newblock ISSN 0167-9473.
\newblock \doi{https://doi.org/10.1016/j.csda.2018.09.005}.
\newblock URL \url{https://www.sciencedirect.com/science/article/pii/S0167947318302330}.

\bibitem[Kulldorff(2001)]{kull2001}
Martin Kulldorff.
\newblock Prospective time periodic geographical disease surveillance using a scan statistic.
\newblock \emph{Journal of the Royal Statistical Society: Series A (Statistics in Society)}, 164\penalty0 (1):\penalty0 61--72, 2001.
\newblock \doi{https://doi.org/10.1111/1467-985X.00186}.

\bibitem[Abolhassani and Prates(2021)]{10.1214/21-SS132}
Ali Abolhassani and Marcos~O. Prates.
\newblock {An up-to-date review of scan statistics}.
\newblock \emph{Statistics Surveys}, 15\penalty0 (none):\penalty0 111 -- 153, 2021.
\newblock \doi{10.1214/21-SS132}.
\newblock URL \url{https://doi.org/10.1214/21-SS132}.

\bibitem[Zhang and Lin(2009{\natexlab{b}})]{https://doi.org/10.1111/j.1541-0420.2008.01069.x}
Tonglin Zhang and Ge~Lin.
\newblock Cluster detection based on spatial associations and iterated residuals in generalized linear mixed models.
\newblock \emph{Biometrics}, 65\penalty0 (2):\penalty0 353--360, 2009{\natexlab{b}}.
\newblock \doi{https://doi.org/10.1111/j.1541-0420.2008.01069.x}.
\newblock URL \url{https://onlinelibrary.wiley.com/doi/abs/10.1111/j.1541-0420.2008.01069.x}.

\bibitem[Lin et~al.(2016)Lin, Kung, and Clayton]{https://doi.org/10.1111/biom.12509}
Pei-Sheng Lin, Yi-Hung Kung, and Murray Clayton.
\newblock Spatial scan statistics for detection of multiple clusters with arbitrary shapes.
\newblock \emph{Biometrics}, 72\penalty0 (4):\penalty0 1226--1234, 2016.
\newblock \doi{https://doi.org/10.1111/biom.12509}.
\newblock URL \url{https://onlinelibrary.wiley.com/doi/abs/10.1111/biom.12509}.

\bibitem[Ahmed and Genin(2020)]{https://doi.org/10.1002/sim.8459}
Mohamed-Salem Ahmed and Michaël Genin.
\newblock A functional-model-adjusted spatial scan statistic.
\newblock \emph{Statistics in Medicine}, 39\penalty0 (8):\penalty0 1025--1040, 2020.
\newblock \doi{https://doi.org/10.1002/sim.8459}.
\newblock URL \url{https://onlinelibrary.wiley.com/doi/abs/10.1002/sim.8459}.

\bibitem[Lee et~al.(2016)Lee, Sun, Sun, and Taylor]{lee2016exact}
Jason~D Lee, Dennis~L Sun, Yuekai Sun, and Jonathan~E Taylor.
\newblock Exact post-selection inference, with application to the lasso.
\newblock \emph{The Annals of Statistics}, 44\penalty0 (3):\penalty0 907--927, 2016.

\bibitem[Tibshirani et~al.(2016)Tibshirani, Taylor, Lockhart, and Tibshirani]{tibshirani2016exact}
Ryan~J Tibshirani, Jonathan Taylor, Richard Lockhart, and Robert Tibshirani.
\newblock Exact post-selection inference for sequential regression procedures.
\newblock \emph{Journal of the American Statistical Association}, 111\penalty0 (514):\penalty0 600--620, 2016.

\bibitem[Zhang et~al.(2010)Zhang, Assunç{\~a}o, and Kulldorff]{Zhang2010SpatialSS}
Zhenkui Zhang, Renato~Martins Assunç{\~a}o, and Martin Kulldorff.
\newblock Spatial scan statistics adjusted for multiple clusters.
\newblock \emph{Journal of Probability and Statistics}, 2010:\penalty0 1--11, 2010.

\bibitem[Li et~al.(2011)Li, Wang, Yang, Li, and Lai]{li2011}
Xiao-Zhou Li, Jin-Feng Wang, Wei-Zhong Yang, Zhong-Jie Li, and Sheng-Jie Lai.
\newblock A spatial scan statistic for multiple clusters.
\newblock \emph{Mathematical Biosciences}, 233\penalty0 (2):\penalty0 135--142, 2011.
\newblock ISSN 0025-5564.
\newblock \doi{https://doi.org/10.1016/j.mbs.2011.07.004}.
\newblock URL \url{https://www.sciencedirect.com/science/article/pii/S0025556411001088}.

\bibitem[Naus(1965)]{naus1965clustering}
JL~Naus.
\newblock Clustering of random points in two dimensions.
\newblock \emph{Biometrika}, 52\penalty0 (1-2):\penalty0 263--266, 1965.

\bibitem[Loader(1991)]{loader1991large}
Clive~R Loader.
\newblock Large-deviation approximations to the distribution of scan statistics.
\newblock \emph{Advances in Applied Probability}, 23\penalty0 (4):\penalty0 751--771, 1991.

\bibitem[Kulldorff et~al.(2009)Kulldorff, Huang, and Konty]{Kulldorff2009}
Martin Kulldorff, Lan Huang, and Kevin Konty.
\newblock A scan statistic for continuous data based on the normal probability model.
\newblock \emph{International Journal of Health Geographics}, 8\penalty0 (1):\penalty0 58, 2009.
\newblock ISSN 1476-072X.
\newblock \doi{10.1186/1476-072x-8-58}.
\newblock URL \url{http://dx.doi.org/10.1186/1476-072X-8-58}.

\bibitem[Takahashi and Shimadzu(2018)]{takahashi2018}
Kunihiko Takahashi and Hideyasu Shimadzu.
\newblock Multiple-cluster detection test for purely temporal disease clustering: Integration of scan statistics and generalized linear models.
\newblock \emph{PLOS ONE}, 13\penalty0 (11):\penalty0 1--15, 11 2018.
\newblock \doi{10.1371/journal.pone.0207821}.
\newblock URL \url{https://doi.org/10.1371/journal.pone.0207821}.

\bibitem[Takahashi and Shimadzu(2020)]{takahashi2020}
Kunihiko Takahashi and Hideyoshi Shimadzu.
\newblock Detecting multiple spatial disease clusters: information criterion and scan statistic approach.
\newblock \emph{International Journal of Health Geographics}, 19\penalty0 (33), 2020.

\bibitem[Xu and Gangnon(2016)]{XU201659}
Jiale Xu and Ronald~E. Gangnon.
\newblock Stepwise and stagewise approaches for spatial cluster detection.
\newblock \emph{Spatial and Spatio-temporal Epidemiology}, 17:\penalty0 59--74, 2016.
\newblock ISSN 1877-5845.
\newblock \doi{https://doi.org/10.1016/j.sste.2016.04.007}.
\newblock URL \url{https://www.sciencedirect.com/science/article/pii/S1877584516300260}.

\bibitem[Kamenetsky et~al.(2022)Kamenetsky, Lee, Zhu, and Gangnon]{KAMENETSKY2022100462}
Maria~E. Kamenetsky, Junho Lee, Jun Zhu, and Ronald~E. Gangnon.
\newblock Regularized spatial and spatio-temporal cluster detection.
\newblock \emph{Spatial and Spatio-temporal Epidemiology}, 41:\penalty0 100462, 2022.
\newblock ISSN 1877-5845.
\newblock \doi{https://doi.org/10.1016/j.sste.2021.100462}.
\newblock URL \url{https://www.sciencedirect.com/science/article/pii/S1877584521000599}.

\bibitem[van~de Geer et~al.(2014)van~de Geer, B{\"u}hlmann, Ritov, and Dezeure]{10.1214/14-AOS1221}
Sara van~de Geer, Peter B{\"u}hlmann, Ya’acov Ritov, and Ruben Dezeure.
\newblock {On asymptotically optimal confidence regions and tests for high-dimensional models}.
\newblock \emph{The Annals of Statistics}, 42\penalty0 (3):\penalty0 1166 -- 1202, 2014.
\newblock \doi{10.1214/14-AOS1221}.
\newblock URL \url{https://doi.org/10.1214/14-AOS1221}.

\bibitem[Friedman et~al.(2010)Friedman, Hastie, and Tibshirani]{JSSv033i01}
Jerome~H. Friedman, Trevor Hastie, and Rob Tibshirani.
\newblock Regularization paths for generalized linear models via coordinate descent.
\newblock \emph{Journal of Statistical Software}, 33\penalty0 (1):\penalty0 1--22, 2010.
\newblock \doi{10.18637/jss.v033.i01}.
\newblock URL \url{https://www.jstatsoft.org/index.php/jss/article/view/v033i01}.

\bibitem[Wang and Rodríguez(2014)]{wang2014}
Hao Wang and Abel Rodríguez.
\newblock Identifying pediatric cancer clusters in florida using log-linear models and generalized lasso penalties.
\newblock \emph{Statistics and Public Policy}, 1\penalty0 (1):\penalty0 86--96, 2014.
\newblock \doi{10.1080/2330443X.2014.960120}.
\newblock URL \url{https://doi.org/10.1080/2330443X.2014.960120}.
\newblock PMID: 25558468.

\bibitem[Tibshirani et~al.(2005)Tibshirani, Saunders, Rosset, Zhu, and Knight]{tib2005}
Robert Tibshirani, Michael Saunders, Saharon Rosset, Ji~Zhu, and Keith Knight.
\newblock Sparsity and smoothness via the fused lasso.
\newblock \emph{Journal of the Royal Statistical Society: Series B (Statistical Methodology)}, 67\penalty0 (1):\penalty0 91--108, 2005.
\newblock \doi{https://doi.org/10.1111/j.1467-9868.2005.00490.x}.
\newblock URL \url{https://rss.onlinelibrary.wiley.com/doi/abs/10.1111/j.1467-9868.2005.00490.x}.

\bibitem[Choi et~al.(2018)Choi, Song, Hwang, and Lee]{choi2018}
Hosik Choi, Eunjung Song, Seung-sik Hwang, and Woojoo Lee.
\newblock A modified generalized lasso algorithm to detect local spatial clusters for count data.
\newblock \emph{AStA Advances in Statistical Analysis}, 102, 01 2018.
\newblock \doi{10.1007/s10182-018-0318-7}.

\bibitem[Hunter and Lange(2004)]{hunter2004}
David~R Hunter and Kenneth Lange.
\newblock A tutorial on mm algorithms.
\newblock \emph{The American Statistician}, 58\penalty0 (1):\penalty0 30--37, 2004.
\newblock \doi{10.1198/0003130042836}.
\newblock URL \url{https://doi.org/10.1198/0003130042836}.

\bibitem[Masuda and Inoue(2022)]{ryo2022}
Ryo Masuda and Ryo Inoue.
\newblock Point event cluster detection via the bayesian generalized fused lasso.
\newblock \emph{ISPRS International Journal of Geo-Information}, 11\penalty0 (3), 2022.
\newblock ISSN 2220-9964.
\newblock \doi{10.3390/ijgi11030187}.
\newblock URL \url{https://www.mdpi.com/2220-9964/11/3/187}.

\bibitem[Choi et~al.(2022)Choi, Hanrahan, Norton, and Zhao]{YGchoi2022}
Young-Geun Choi, Lawrence~P. Hanrahan, Derek Norton, and Ying-Qi Zhao.
\newblock Simultaneous spatial smoothing and outlier detection using penalized regression, with application to childhood obesity surveillance from electronic health records.
\newblock \emph{Biometrics}, 78\penalty0 (1):\penalty0 324--336, 2022.
\newblock \doi{https://doi.org/10.1111/biom.13404}.
\newblock URL \url{https://onlinelibrary.wiley.com/doi/abs/10.1111/biom.13404}.

\end{thebibliography}

\section*{Appendix}
\subsection*{Proof of Proposition~\ref{prop equiv exp based under poisson}}
%
\begin{align*}
& \mathrm{LLR}_k \\
& = \log \frac{ \max_{  ( 0, \theta_{S_{k}} )  }  \Pi_{i=1}^N  f (y_{i} | Z_{k,i};  0, \theta_{S_{k}} ) }{  \Pi_{i=1}^N   
f (y_{i} |  Z_{k,i};   0, 0 ) } \\
& = \max_{ ( 0, \theta_{S_k} ) } \left\{  \sum_{i=1}^N \log f (y_{i} | Z_{k,i};  \theta_{S_{k}} )   \right\} - \sum_{i=1}^N \log f (y_{i} | Z_{k,i}; 0,  0 ) \\
& = \max_{ ( 0, \theta_{S_{k}} ) } \left[ \sum_{i=1}^N \left\{  y_i \left(  \theta_{S_k} Z_{k,i} + \log ( \gamma_i ) \right) - \exp ( \theta_{S_k} Z_{k,i} + \log( \gamma_i ) )     \right\} \right] - \sum_{i=1}^N \left\{  y_i  \log ( \gamma_i ) - \gamma_i     \right\} \\
& = \max_{ ( 0, \theta_{S_{k}} ) } \left[ \sum_{i=1}^N \left\{   y_i \theta_{S_k} Z_{k,i}  - \gamma_i \exp ( \theta_{S_k} Z_{k,i} )   \right\} \right] + \sum_{i=1}^N \gamma_i  \\
& \mbox{The first term attains the maximum when } \hat{\theta}_{S_k} = \log \left( \frac{ \sum_{i=1}^N y_i Z_{k,i} }{  \sum_{i=1}^N \gamma_i Z_{k,i} } \right) . \\
& = \left( \sum_{i=1}^N y_i Z_{k,i} \right) \log \left( \frac{\sum_{i=1}^N  y_i Z_{k,i} }{\sum_{i=1}^N  \gamma_i Z_{k,i} } \right) - \sum_{i=1}^N  \gamma_i   \left( \frac{ \sum_{i=1}^N y_i Z_{k,i} }{  \sum_{i=1}^N \gamma_i Z_{k,i} }  \right)^{ Z_{k,i} }   + \sum_{i=1}^N \gamma_i \\
& = \left( \sum_{i=1}^N y_i Z_{k,i} \right) \log \left( \frac{\sum_{i=1}^N  y_i Z_{k,i} }{\sum_{i=1}^N  \gamma_i Z_{k,i} } \right) - \sum_{i=1}^N  y_i Z_{k,i} - \sum_{i=1}^N (1 - Z_{k,i}) \gamma_i   + \sum_{i=1}^N \gamma_i \\
& = \left( \sum_{i=1}^N y_i Z_{k,i} \right) \log \left( \frac{ \sum_{i=1}^N y_i Z_{k,i} }{  \sum_{i=1}^N \gamma_i Z_{k,i} } \right) + \sum_{i=1}^N ( \gamma_i - y_i )Z_{k,i} .
\end{align*}

%
\subsection*{Proof of Proposition~\ref{prop equiv exp based under gauss} for population-based and expectation-based}
%
For population-based, we have the following log-likelihood ratio
\begin{align*}
& \mathrm{LLR}_k \\ 
& = \log \frac{ \max_{ ( \alpha, \theta_{S_{k}} ) }  \Pi_{i=1}^N  f (y_{i} | Z_{k,i};  \alpha, \theta_{S_{k}} ) }{  \max_{ ( \alpha, 0 )}  \Pi_{i=1}^N   
f (y_{i} |  Z_{k,i}; \alpha, 0 ) } \\ 
& = \max_{ ( \alpha, \theta_{S_k} ) } \left\{  \sum_{i=1}^N \log f (y_{i} | Z_{k,i}; \alpha, \theta_{S_{k}} )   \right\} - \max_{ ( \alpha, 0 ) } \left\{  \sum_{i=1}^N \log f (y_{i} | Z_{k,i}; \alpha, 0 ) \right\} \\
& = \max_{ ( \alpha, \theta_{S_k} ) }  \left\{ \sum_{i=1}^N  \log \frac{1}{ \sqrt{2 \pi \sigma_i^2} }  - \frac{ (y_i - \gamma_i( 1 +\alpha + \theta_{S_k} Z_{ k,i }) )^2 }{ 2 \sigma_i^2 }   \right\} \\ 
& \quad - \max_{ ( \alpha, 0 ) } \left\{ \sum_{i=1}^N  \log \frac{1}{ \sqrt{2 \pi \sigma_i^2} } - \frac{ (y_i - \gamma_i ( 1 + \alpha ) )^2 }{ 2 \sigma_i^2 }   \right\} \\
& \mbox{The first term attains the maximum when } \\
&  \hat{\theta}_{ S_k } =  \frac{ \sum_{i=1}^N \frac{ y_i \gamma_i }{ \sigma_i^2 } Z_{k,i} }{ \sum_{i=1}^N \frac{ \gamma_i^2 }{ \sigma_i^2 }  Z_{k,i} } -  \frac{ \sum_{i=1}^N \frac{ y_i \gamma_i }{ \sigma_i^2 } ( 1- Z_{k,i} ) }{ \sum_{i=1}^N \frac{ \gamma_i^2 }{ \sigma_i^2 } ( 1- Z_{k,i} ) } \mbox{ and } \hat{ \alpha } = \frac{ \sum_{i=1}^N \frac{ y_i \gamma_i }{ \sigma_i^2 } ( 1- Z_{k,i} ) }{ \sum_{i=1}^N \frac{ \gamma_i^2 }{ \sigma_i^2 } ( 1- Z_{k,i} ) } -1 , \\
& \mbox{and the second term does when } \\
& \bar{ \alpha }= \frac{ \sum_{i=1}^N \frac{ y_i \gamma_i }{ \sigma_i^2 } }{ \sum_{i=1}^N \frac{ \gamma_i^2 }{ \sigma_i^2 } } -1. \\
& = \sum_{i=1}^N   - \frac{ (y_i Z_{k,i} - \gamma_i( \hat{ \alpha } + \hat{ \theta }_{S_k} ) Z_{ k,i } + y_i ( 1 - Z_{k,i} ) - \gamma_i \hat{ \alpha } (1 - Z_{k,i}  )  )^2 }{ 2 \sigma_i^2 } \\
& \quad - \sum_{i=1}^N - \frac{ (y_i - \gamma_i \bar{ \alpha } )^2 }{ 2 \sigma_i^2 }  \\
& = \sum_{i=1}^N   - \frac{ (y_i  - \gamma_i( \hat{ \alpha } + \hat{ \theta }_{S_k}  ) )^2 }{ 2 \sigma_i^2 } Z_{k,i} + \sum_{i=1}^N - \frac{ ( y_i  - \gamma_i \hat{ \alpha } )^2 }{ 2 \sigma_i^2 } (1 - Z_{k,i} )  \\  
& \quad + \sum_{i=1}^N \frac{ y_i^2 }{2 \sigma_i^2} -  \frac{ \left( \sum_{i=1}^N \frac{ y_i \gamma_i }{ \sigma_i^2 } \right)^2 }{2 \sum_{i=1}^N \frac{ \gamma_i^2 }{\sigma_i^2 } } \\
& = \frac{ \left( \sum_{i=1}^N \frac{y_i \gamma_i }{ \sigma_i^2} Z_{k,i} \right)^2 }{ 2 \sum_{i=1}^N \frac{ \gamma_i^2 }{ \sigma_{i}^2 } Z_{k,i}  } + \frac{ \left( \sum_{i=1}^N \frac{y_i \gamma_i }{ \sigma_i^2} ( 1 - Z_{k,i} ) \right)^2 }{ 2 \sum_{i=1}^N \frac{ \gamma_i^2 }{ \sigma_{i}^2 } ( 1 - Z_{k,i} )  } -\frac{ \left( \sum_{i=1}^N \frac{y_i \gamma_i }{ \sigma_i^2}  \right)^2 }{ 2 \sum_{i=1}^N \frac{ \gamma_i^2 }{ \sigma_{i}^2 }  }.
\end{align*}
%
%

For expectation-based, we have the following log-likelihood ratio
\begin{align*}
& \mathrm{LLR}_k \\ 
& = \log \frac{ \max_{ (0, \theta_{S_{k}} )  }  \Pi_{i=1}^N  f (y_{i} | Z_{k,i}; 0, \theta_{S_{k}} ) }{    \Pi_{i=1}^N   
f (y_{i} |  Z_{k,i};  0 , 0 ) } \\ 
& = \max_{ ( 0, \theta_{S_k} ) } \left\{  \sum_{i=1}^N \log f (y_{i} | Z_{k,i};  0, \theta_{S_{k}} )   \right\} - \sum_{i=1}^N \log f (y_{i} | Z_{k,i};  0, 0 ) \\
& = \max_{ ( 0, \theta_{S_k} ) }  \left[ \sum_{i=1}^N  \log \frac{1}{ \sqrt{2 \pi \sigma_i^2} }  - \frac{ (y_i - \gamma_i - \gamma_i \theta_{S_k} Z_{ k,i } )^2 }{ 2 \sigma_i^2 }   \right] - \sum_{i=1}^N \left\{  \log \frac{1}{ \sqrt{2 \pi \sigma_i^2} } - \frac{ (y_i - \gamma_i)^2 }{ 2 \sigma_i^2 }   \right\} \\
& = \max_{ ( 0, \theta_{S_k} ) } \left\{ \sum_{i=1}^N  - \frac{  (y_i - \gamma_i - \gamma_i \theta_{S_k} Z_{ k,i } )^2 }{ 2 \sigma_i^2 } \right\} + \sum_{i=1}^N \frac{ (y_i - \gamma_i)^2 }{ 2 \sigma_i^2 }     \\
& \mbox{The first term attains the maximum when } \hat{\theta}_{S_k} =  \frac{ \sum_{i=1}^N \frac{ y_i \gamma_i }{ \sigma_i^2 }  Z_{k,i} }{  \sum_{i=1}^N \frac{ \gamma_i^2 }{ \sigma_i^2 } Z_{k,i} } -1  .  \\
& =   \frac{ 1- ( 1 + \hat{\theta}_{S_k} )^2 }{2} \sum_{i=1}^N \frac{ \gamma_i^2 }{ \sigma_i^2 } Z_{k,i} +  \hat{\theta}_{S_k}  \sum_{i=1}^N \frac{ y_i \gamma_i }{ \sigma_i^2 } Z_{k,i} \\
& = \frac{ \left( \sum_{i=1}^N \frac{ y_i \gamma_i }{ \sigma_i^2 }  Z_{k,i} \right)^2 }{ 2 \sum_{i=1}^N \frac{ \gamma_i^2 }{ \sigma_i^2 } Z_{k,i} } + \frac{ \sum_{i=1}^N \frac{ \gamma_i^2 }{ \sigma_i^2 } Z_{k,i} }{2} - \sum_{i=1}^N \frac{ y_i \gamma_i }{ \sigma_i^2 }  Z_{k,i}.
\end{align*}
%
\subsection*{Proof of Proposition~\ref{prop population based full para gauss SSS}}
%
%
\begin{align*}
& \mathrm{LLR}_k \\ 
& = \log \frac{ \max_{ (\alpha, \theta_{S_{k}} , \sigma^2 )  }  \Pi_{i=1}^N  f (y_{i} | Z_{k,i}; \alpha, \theta_{S_{k}} , \sigma^2 ) }{  \max_{ (\alpha, 0 , \sigma^2 )  }   \Pi_{i=1}^N   
f (y_{i} |  Z_{k,i};  \alpha, 0 , \sigma^2  ) } \\ 
& = \max_{ (\alpha, \theta_{S_{k}} , \sigma^2 )  } \left\{  \sum_{i=1}^N \log f (y_{i} | Z_{k,i};  \alpha, \theta_{S_{k}} , \sigma^2 )   \right\} - \max_{ (\alpha, 0 , \sigma^2 )  } \left\{ \sum_{i=1}^N \log f (y_{i} | Z_{k,i};  \alpha, 0 , \sigma^2 ) \right\} \\
& = \max_{ (\alpha, \theta_{S_{k}} , \sigma^2 )  } \left[ \sum_{i=1}^N  \log \frac{1}{ \sqrt{2 \pi \sigma^2} }  - \frac{ (y_i - \alpha - \theta_{S_k} Z_{ k,i } )^2 }{ 2 \sigma^2 }   \right] - \max_{ (\alpha, 0 , \sigma^2 )  } \left[ \sum_{i=1}^N   \log \frac{1}{ \sqrt{2 \pi \sigma^2} } - \frac{ (y_i - \alpha )^2 }{ 2 \sigma^2 }    \right] \\
&\mbox{The first term attains the maximum when } \\
& \hat{\alpha} = \frac{1}{N} \sum_{i=1}^N y_i - \frac{1}{N} \sum_{i=1}^N \widehat{\theta}_{S_k}  Z_{k,i} , \ \widehat{\theta}_{S_k} =  \frac{ \sum_{i=1}^N y_i Z_{k,i} - \hat{ \alpha } \sum_{i=1}^N Z_{k,i} }{ \sum_{i=1}^N Z_{k,i} } \mbox{, and } \widehat{\sigma^2} = \frac{ \sum_{i=1}^N (y_i - \hat{\alpha} - \hat{\theta}_{S_k} Z_{k,i} )^2 }{ N} ,\\
& \mbox{and the second term attains the maximum when } \tilde{\alpha} = \frac{1}{N} \sum_{i=1}^N y_i  \mbox{ and } \widetilde{\sigma^2} = \frac{ \sum_{i=1}^N (y_i - \tilde{\alpha} )^2 }{N}. \\
& = - \sum_{i=1}^N  \frac{1}{2} \log  \widehat{\sigma^2} -  \sum_{i=1}^N \frac{ (y_i - \hat{\alpha} - \hat{\theta}_{S_k} Z_{k,i} )^2 }{ 2  \widehat{ \sigma^2 } }  + \sum_{i=1}^N \frac{1}{2} \log  \widetilde{ \sigma ^2} + \sum_{i=1}^N \frac{ ( y_i - \tilde{\alpha} )^2 }{ 2  \widetilde{\sigma^2} } \\
& = \frac{N}{2} \log \left\{ \frac{ 1 }{ N } \sum_{i=1}^N  ( y_i - \frac{1}{N} \sum_{i=1}^N y_i )^2 \right\}  - \frac{N}{2} \log \widehat{ \sigma_{S_k}^2 } , \\
& \mbox{where } \widehat{  \sigma_{S_k}^2 } = \frac{1}{N} \left( \sum_{i=i}^N y_i^2   - \frac{ ( \sum_{i=1}^N y_i Z_{k,i}  )^2 }{ \sum_{i=1}^N Z_{k,i} } - \frac{ \left\{ \sum_{i=1}^N y_i (1- Z_{k,i} ) \right\}^2 }{ \sum_{i=1}^N (1 - Z_{k,i}) } \right) .
\end{align*}

%

%
\subsection*{Proof of Theorem~\ref{prop expectation based full para gauss SSS}}
%
\begin{align*}
& \mathrm{LLR}_k  \\
& = \log \frac{ \max_{ (0, \theta_{S_{k}}, \sigma^2 )  }  \Pi_{i=1}^N  f (y_{i} | Z_{k,i}; 0, \theta_{S_{k}}, \sigma^2 ) }{  \max_{ (0, 0, \sigma^2 )  }  \Pi_{i=1}^N   
f (y_{i} |  Z_{k,i};  0 , 0, \sigma^2 ) } \\ 
& =  \max_{ ( 0, \theta_{S_k}, \sigma^2 ) } \left\{  \sum_{i=1}^N \log f (y_{i} | Z_{k,i};  0, \theta_{S_{k}} , \sigma^2 )   \right\} -   \max_{ ( 0, 0, \sigma^2 ) } \left\{ \sum_{i=1}^N \log f (y_{i} | Z_{k,i};  0, 0 , \sigma^2 ) \right\} \\
& =  \max_{ ( 0, \theta_{S_k} , \sigma^2 ) }  \left[ \sum_{i=1}^N  \log \frac{1}{ \sqrt{2 \pi \sigma^2} }  - \frac{ (y_i - \theta_{S_k} Z_{ k,i } )^2 }{ 2 \sigma^2 }   \right] - \max_{ ( 0, 0 , \sigma^2 ) } \left[ \sum_{i=1}^N   \log \frac{1}{ \sqrt{2 \pi \sigma^2} } - \frac{ y_i^2 }{ 2 \sigma^2 }    \right] \\
& \mbox{The first term attains the maximum when } \hat{\theta}_{S_k} = \frac{ \sum_{i=1}^N y_i Z_{k,i} }{ \sum_{i=1}^N Z_{k,i} } \mbox{ and } \widehat{\sigma^2} = \frac{ 1 }{ N } \sum_{i=1}^N  (y_i - \hat{\theta}_{S_k} Z_{k,i} )^2, \\
& \mbox{and the second term attains the maximum when } \tilde{\sigma^2} = \frac{ 1 }{ N } \sum_{i=1}^N y_i^2. \\
& = - \sum_{i=1}^N  \frac{1}{2} \log  \widehat{\sigma^2} -  \sum_{i=1}^N \frac{ (y_i - \hat{\theta}_{S_k} Z_{k,i} )^2 }{ 2  \widehat{ \sigma^2 } }  + \sum_{i=1}^N  \frac{1}{2} \log  \tilde{ \sigma ^2} + \sum_{i=1}^N \frac{ y_i^2 }{ 2  \tilde{\sigma^2} } \\
& = - \frac{N}{2} \log \left\{ \frac{1}{N} \sum_{i=1}^N  (y_i - \hat{\theta}_{S_k} Z_{k,i} )^2 \right\}   + \frac{N}{2} \log \left( \frac{1}{N} \sum_{i=1}^N y_i^2 \right) \\
& = - \frac{N}{2} \log  \left\{ \frac{1}{N} \sum_{i=1}^N y_i^2 - \frac{1}{N} \frac{ \left( \sum_{i=1}^N y_i   Z_{k,i} \right)^2  }{ \sum_{i=1}^N Z_{k,i} }  \right\}  + \frac{N}{2} \log \left( \frac{1}{N} \sum_{i=1}^N y_i^2 \right) .
\end{align*}

%
\subsection*{Proof of Theorem~\ref{new exp. bernoulli}}
%
\begin{align*}
& \mathrm{LLR}_k  \\
& = \log \frac{ \max_{ (0, \theta_{S_{k}} )  }  \Pi_{i=1}^N  f (y_{i} | Z_{k,i}; 0, \theta_{S_{k}} ) }{    \Pi_{i=1}^N   
f (y_{i} |  Z_{k,i};  0 , 0 ) } \\ 
& =  \max_{ ( 0, \theta_{S_k} ) } \left\{  \sum_{i=1}^N \log f (y_{i} | Z_{k,i};  0, \theta_{S_{k}} )   \right\} - \sum_{i=1}^N \log f (y_{i} | Z_{k,i};  0, 0 ) \\
& = \max_{ ( 0, \theta_{S_k} ) } \left[ \sum_{i=1}^N \left\{ y_i \theta_{ S_k Z_{k,i} } - \log \left\{ 1 + \exp( \theta_{S_k} Z_{k,i} ) \right\} \right\}   \right] + N \log 2 \\
& \mbox{The first term attains the maximum when } \hat{\theta}_{S_k} = \log \frac{ \sum_{i=1}^N y_i Z_{k,i} }{ \sum_{i=1}^N ( 1 - y_i ) Z_{k,i}  }. \\
& = \left( \sum_{i=1}^N y_i Z_{k,i} \right) \log \frac{ \sum_{i=1}^N y_i Z_{k,i} }{ \sum_{i=1}^N ( 1 - y_i ) Z_{k,i}  } - \sum_{i=1}^N  \log \left\{ 1 + \exp \left( Z_{k,i} \log \frac{ \sum_{i=1}^N y_i Z_{k,i}  }{ \sum_{i=1}^N ( 1 - y_i ) Z_{k,i}  }  \right) \right\} + N \log 2 \\
& = \left( \sum_{i=1}^N y_i Z_{k,i} \right) \log \frac{ \sum_{i=1}^N y_i Z_{k,i} }{ \sum_{i=1}^N ( 1 - y_i ) Z_{k,i}  } - \sum_{i=1}^N Z_{k,i} \log \left\{ 1 + \frac{ \sum_{i=1}^N y_i Z_{k,i} }{ \sum_{i=1}^N ( 1 - y_i ) Z_{k,i}  } \right\} - \sum_{i=1}^N (1 - Z_{k,i}) \log 2 \\
& \qquad + N \log 2 \\
& = \left( \sum_{i=1}^N y_i Z_{k,i} \right) \log \frac{ \sum_{i=1}^N y_i Z_{k,i} }{ \sum_{i=1}^N ( 1 - y_i ) Z_{k,i}  } - \sum_{i=1}^N Z_{k,i} \log \left\{ \frac{ \sum_{i=1}^N Z_{k,i} }{ \sum_{i=1}^N ( 1 - y_i ) Z_{k,i}  } \right\} + \sum_{i=1}^N Z_{k,i} \log 2 \\
& = \left( \sum_{i=1}^N y_i Z_{k,i} \right) \log \frac{ \left( \sum_{i=1}^N y_i Z_{k,i} \right) }{ \left( \sum_{i=1}^N  Z_{k,i} \right) } + \left( \sum_{i=1}^N ( 1 - y_i ) Z_{k,i}  \right) \log \frac{ \sum_{i=1}^N ( 1 - y_i ) Z_{k,i}  }{ \sum_{i=1}^N  Z_{k,i}  } + \sum_{i=1}^N Z_{k,i} \log 2.
\end{align*}

\end{document}